\newcommand{\qedsymbol}{\hfill$\blacksquare$}
\newtheorem{standing}{Standing Assumption}
\newtheorem{lemma}{\textit{Lemma}}
\newtheorem{remark}{Remark}
\newtheorem{prop}{Proposition}
\newtheorem{thm}{Theorem}
\newenvironment{proof}{\noindent\textit{Proof.}\ }{\hfill$\square$}
\def\BibTeX{{\rm B\kern-.05em{\sc i\kern-.025em b}\kern-.08em
    T\kern-.1667em\lower.7ex\hbox{E}\kern-.125emX}}
\begin{document}
\title{Neural network-based identification of state-space switching nonlinear systems}
\author{Yanxin Zhang, Chengpu Yu and Filippo Fabiani
\thanks{This work was supported by the National Natural Science Foundation of China (Grant No. 62088101, 62473046), Chongqing Natural Science Foundation CSTB2023NSCQ-JQX0018, and Beijing Natural Science Foundation L221005. (Corresponding author: Chengpu Yu).}

\thanks{Yanxin Zhang is with School of Automation, Beijing Institute of Technology, Beijing 100081, PR China (e-mail: zhangyanxin@bit.edu.cn).}
\thanks{Chengpu Yu is with School of Automation, Beijing Institute of Technology, Beijing 100081, PR China (e-mail: yuchengpu@bit.edu.cn).}
\thanks{Filippo Fabiani is with IMT School for Advanced Studies Lucca, Piazza San Francesco 19, 55100, Lucca, Italy (e-mail: filippo.fabiani@imtlucca.it).}
}

\maketitle

\begin{abstract}
We design specific neural networks (NNs) for the identification of switching nonlinear systems in the state-space form, which explicitly model the switching behavior and address the inherent coupling between system parameters and switching modes. This coupling is specifically addressed by leveraging the expectation-maximization (EM) framework. In particular, our technique will combine a moving window approach in the E-step to efficiently estimate the switching sequence, together with an extended Kalman filter (EKF) in the M-step to train the NNs with a quadratic convergence rate. Extensive numerical simulations, involving both academic examples and a battery charge management system case study, illustrate that our technique outperforms available ones in terms of parameter estimation accuracy, model fitting, and switching sequence identification.
\end{abstract}

\begin{IEEEkeywords}
Switching systems, Neural network, Expectation maximization, Extend Kalman filter, nonlinear system identification.
\end{IEEEkeywords}

\section{Introduction}

In several engineering applications, such as speech recognition \cite{Schuller2008}, financial \cite{Timmermann,Guidolin,Tan2023}, and robotic systems \cite{Carloni2007,Schlegl2003}, the occurrence of mode switching is a pervasive feature characterizing dynamical systems. To effectively analyze and control these systems, it is then vital to investigate their internal structures and develop models that accurately represent their behavior. The process of modeling such systems, however, often entails the challenging task of parameter identification across various subsystems, as well as the estimation of the system's operational modes at any given time instant \cite{Piga2016b,Chan2008}.

Remarkably, the active modes of a switching system are often coupled with the parameters of its subsystems, a fact that complicates the identification of the active subsystem at each time instant and the estimation of the switching behaviors. Current methodologies treat the switching mode sequence as implicit discrete states, clustering the temporal segments to obtain a collection of temporal partitions belonging to a subsystem \cite{Bako2011,Ferrari2003}. Subsequently, the parameters of the subsystems are identified using observed input-output data \cite{Lauer2011,Ohlsson2010}.

\subsection{Related work}
It is well-known that the identification of switching systems amounts to an NP-hard problem in general \cite{Roll2004,Lauer2015}. Nevertheless, several methods have been proposed to address the typical identification issues for switching systems \cite{Porreca2009,Ripaccioli2009,Vidal2002,Xu2009}. Specifically, in \cite{survey} the authors survey the methodologies for identifying switching systems, also in the form of piecewise-affine (PWA) models, that have been developed during the last decade. More recently, some alternating identification methods have been proposed for jump models \cite{Vidal2002,Bemporad2018,Piga2020}. These approaches leverage the Bayesian theory to maximize the posterior probability function, thereby obtaining estimates of the switching sequence and the parameters of the subsystems. Available works on linear parameter varying (LPV) systems identification \cite{Toth2024,Piga2015,Golabi2017} employ a data-driven approach to model the LPV system with the influence of noise.  All the abovementioned works, however, focus on switching linear or PWA systems, while to the best of our knowledge the literature on switching nonlinear systems identification, especially in state-space form, is relatively poor.

Neural networks (NNs) serve as an effective means for modeling nonlinear systems and have been widely applied across various domains. In \cite{Batruni1991}, a multi-layer NN was applied to model PWA systems, while in \cite{Fabiani2025} it has been shown that the same class of models can be identified by combining an OptNet layer \cite{Amos2017} in cascade with a buffer one. In both approaches, the resulting NN could then be trained using back-propagation algorithms. For systems with (unknown) hidden states, some useful methods are introduced for learning the nonlinear state-space models in \cite{Prasad2003,Bemporad2021}. However, these methods rely on gradient-based algorithms, which are renowned for their slow convergence rate. Inspired by the fresh look of extended Kalman filter (EKF) in \cite{Humpherys2012}, a NN training algorithm based on the EKF is proposed in \cite{Bemporad2023}. However, these methods model the system using a NN as a black box, which can result in the loss of crucial internal information. For instance, when the system is composed of multiple subsystems, and only one of them is active at each time instant, the aforementioned approaches fail to capture the switching behavior of the system. Consequently, it is key to develop an algorithmic framework capable of modeling switching nonlinear systems without losing the information on their switching behavior.

\subsection{Summary of contribution and paper organization}
We devise a NN-based method for the identification of state-space switching nonlinear systems. Our method is developed in the expectation-maximization (EM) framework \cite{Mark2022}, thus consisting of two parts. In the E-step, the switching sequence is estimated by using a moving window approach. In the M-step, an EKF is used to obtain the estimation of the parameters of each subsystem. 

Our main contribution can be summarized as follows:
\begin{enumerate}
	\item We design a NN-based model able to represent nonlinear switching systems in the state-space form.
	\item The Markov property possessed by the system usually makes the switching sequence estimation computationally intensive \cite{Mark2022}. We adopt a moving window approach to drastically reduce such a computational burden.
	\item Given the time-consuming training of standard NNs and their lack of robustness to noise \cite{mlp,Anna2018,Jiao}, we develop an EKF-based NN training technique featuring a quadratic convergence rate.
\end{enumerate}
The performance of our method is finally tested through extensive numerical experiments on both academic examples and a real-world battery charge management system case study, which also aim at evaluating the efficiency of our EKF-based training scheme.


The rest of the paper is organized as follows: in \S \ref{sec2} we describe the system and formalize the problem considered. In \S \ref{sec3} we discuss the identification method based on the EM framework. Specifically, \S \ref{sec3.1} gives the maximization step of the parameters in each subsystems by using the EKF-based method, while \S \ref{sec3.2} describes the expectation step based on a moving window approach to obtain the estimation of the switching sequence. The convergence analysis is then given in \S \ref{sec3.3}. Numerical simulations are discussed in \S \ref{sec4}. Finally, the conclusion and future work are given in \S \ref{sec5}. 
The proofs of the main technical results derived in the paper are deferred to Appendix \ref{sec:proofs}.

\subsection*{Notation}
$\mathbb{N}$, $\mathbb{Z}$, and $\mathbb{R}$ denote the set of natural, integer and real numbers, respectively. We indicate the extended real numbers as $\bar{\mathbb R}\coloneqq\mathbb{R}\cup\{+\infty\}$.
Given a matrix $X$, $\Vert X\Vert$ denotes its spectral norm, $\textrm{tr}(X)$ the trace, and $\textrm{vec}(X)$ the column vectorization of $X$. The operator $\mathrm{diag}(\cdot)$ produces a diagonal matrix with entries as its arguments. $\mathbb{P}[\cdot]$ and $\mathbb{E}[\cdot]$ respectively denote a probability distribution and the related expected value. $\mathbb{P}_\theta[\cdot]$ and $\mathbb{E}_\theta[\cdot]$ respectively denote the probability and expectation under the parameter $\theta$. For a sequence $\{s_t\}_{t\in\mathbb{N}}$, $s_T=\mathcal{O}(T)$ indicates that $\lim\sup_{T\to\infty}s_T/T<\infty$. $\mathbb{I}(\cdot,\cdot)$ denotes the standard indicator function, i.e., $\mathbb{I}(s(i),\hat{s}(i))=1$ if $s(i)=\hat{s}(i)$, $0$ otherwise. $I$ identifies a standard identity matrix, $N(\mu,\sigma^2)$ denotes the normal distribution of a random variable with mean $\mu$ and standard deviation $\sigma$.  

\section{Mathematical formulation}\label{sec2}
\subsection{Dynamical system description}
In this paper we will consider switching nonlinear systems in the following state-space representation: 
\begin{align}
	x(t+1)&=f_{s_t}(x(t),u(t),\theta_{x,s_t})+\zeta(t),\label{eq1.1}\\
	y(t)&=g_{s_t}(x(t),u(t),\theta_{y,s_t})+\xi(t)\label{eq1.4},
\end{align}
where $t\in\mathbb{Z}$ is the discrete-time index, $x\in\mathbb{R}^{n_x}$ represents the vector of state variables, $u\in\mathbb{R}^{n_u}$ is the control input, and $y\in \mathbb{R}^{n_y}$ denotes the system output. The scalar variable $s_t\in\{1,\ldots,K\}$ denotes the hidden active mode at time $t$, which selects some $f_{s_t}$ and $g_{s_t}$ within a collection of $K$ nonlinear submodels. Specifically, for each $s_t\in\{1,\ldots,K\}$ we have $f_{s_t}:\mathbb{R}^{n_x}\times\mathbb{R}^{n_u}\times\mathbb{R}^{n_{\theta_x}}\to\mathbb{R}^{n_x}$ and $g_{s_t}:\mathbb{R}^{n_x}\times\mathbb{R}^{n_u}\times\mathbb{R}^{n_{\theta_y}}\to\mathbb{R}^{n_y}$, where $\theta_{x,s_t}\in\mathbb{R}^{n_{\theta_x}}$ and $\theta_{y,s_t}\in\mathbb{R}^{n_{\theta_y}}$ denote the parameters characterizing $f_{s_t}$ and $g_{s_t}$ at time instant $t$, respectively. Finally, $\zeta(t)\in\mathbb{R}^{n_x}$ and $\xi(t)\in\mathbb{R}^{n_y}$ represent the process and the measurement noises, respectively.

\begin{standing}
	For all $t\in\mathbb{Z}$, $\zeta(t)\sim N(0,\Sigma_1(t))$ and $\xi(t)\sim N(0,\Sigma_2(t))$.
\end{standing}


\subsection{A recurrent neural network model}
Recurrent neural networks (RNNs) are usually designed for processing sequential data, and are characterized by their directional cyclic structure that enables to retain and utilize past information, making them appropriate for modeling causal dynamical systems \cite{Pan2024}. 

In the considered framework, we then propose to model the nonlinear, yet unknown, functions in \eqref{eq1.1} by means of tailored RNNs. Specifically, the state functions $f(\cdot)$ can be modeled by the following specific RNN:
\begin{equation}\label{eq1.2}
	\begin{aligned}
		h_f^1&=W_f^1\begin{bmatrix}
			x(t)\\
			u(t)
		\end{bmatrix}+b_f^1,\\
		h_f^2&=W_f^2 f^1(h_f^1)+b_f^2,\\
		&\vdots\\
		x(t+1)&=W_f^{l_f}f^{l_f-1}(h_f^{l_f-1})+b_f^{l_f}.
	\end{aligned}
\end{equation}
Similarly, the output function $g(\cdot)$ turns into:
\begin{equation}\label{eq1.3}
	\begin{aligned}
		h_g^1&=W_f^1\begin{bmatrix}
			x(t)\\
			u(t)
		\end{bmatrix}+b_g^1,\\
		h_g^2&=W_g^2 g^1(h_g^1)+b_g^2,\\
		&\vdots\\
		h_g^{l_g}&=W_g^{l_g}g^{l_g-1}(h_g^{l_g-1})+b_g^{l_g},\\
		y(t)&=g^{l_{g}}(h_g^{l_g}),
	\end{aligned}
\end{equation}
where $l_f$, $l_g>0$ represent the number of hidden layers of the state and output functions, respectively, $h_f^i$, $i=1,\ldots,l_f$ and $h_g^i$, $i=1,\ldots,l_g$ denote the output of the $i$-th layer within the RNN, which is then used as input to the $(i+1)$-th layer, $f^i$, $i=1,\ldots,l_f$ and $g^i$, $i=1,\ldots,l_g$ are the corresponding activation functions, such as $\tanh(\cdot)$ or rectified linear unit (ReLU). Finally, $W^i_f$, $i=1,\cdots,l_f$ and $W^i_g$, $i=1,\ldots,l_g$ are the weight matrices for the corresponding layer with appropriate dimensions, $b_f^i$, $i=1,\ldots,l_f$ and $b_g^i$, $i=1,\ldots,l_g$ are the associated vector of bias terms.

\begin{lemma}\textup{(Universal approximation theorem \cite{Hornik1989,Hornik1991})}\label{lemma1}
	The standard multilayer feedforward network with as few as a single hidden layer, and arbitrary bounded and nonconstant activation functions is a universal approximator with respect (w.r.t.) to any given continuous function.
\end{lemma}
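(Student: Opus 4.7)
The approach I will take follows the classical Hahn--Banach strategy of Cybenko (1989), adapted to the weaker ``bounded and nonconstant'' activation setting of Hornik (1991). Fix any compact $K\subset\mathbb{R}^n$ and let $\mathcal{N}_\sigma(K)\subset C(K)$ denote the linear span of the single-hidden-layer outputs $x\mapsto\sum_{i=1}^{N}\alpha_i\,\sigma(w_i^\top x+b_i)$, where $\sigma$ is the fixed bounded nonconstant activation. The goal is to prove that the closure $\overline{\mathcal{N}_\sigma(K)}$ in the supremum norm coincides with $C(K)$; once this is established on every compact $K$, the result for deeper architectures follows automatically since their class of representable functions strictly contains $\mathcal{N}_\sigma(K)$.

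First I would argue by contradiction: suppose $\overline{\mathcal{N}_\sigma(K)}\neq C(K)$. Since $\mathcal{N}_\sigma(K)$ is a linear subspace, the Hahn--Banach theorem yields a nonzero bounded linear functional $\Lambda$ on $C(K)$ that vanishes on $\overline{\mathcal{N}_\sigma(K)}$. By the Riesz representation theorem, $\Lambda$ is represented by a nonzero finite signed regular Borel measure $\mu$ on $K$, so that
\begin{equation*}
\int_K \sigma(w^\top x+b)\,\mathrm{d}\mu(x)=0\quad\text{for every }w\in\mathbb{R}^n,\;b\in\mathbb{R}.
\end{equation*}
The entire proof then reduces to deducing $\mu=0$ from this one-parameter family of identities, which is the main obstacle.

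For this crucial step I would exploit boundedness and nonconstancy of $\sigma$ to synthesize richer test functions. Along any fixed direction $w$, the family $\{b\mapsto\sigma(w^\top x+b)\}_{b\in\mathbb{R}}$ is a uniformly bounded set of translates of a one-dimensional nonconstant function; convolving these translates against a smooth compactly supported mollifier, and passing to the limit via dominated convergence against $\mu$, produces in the ``$L^\infty$-weak$^*$ closure against $\mu$'' every bounded continuous function of the scalar variable $w^\top x$. In particular, one recovers the characters $x\mapsto e^{\mathrm{i}\,w^\top x}$ as admissible test functions whose integral against $\mu$ still vanishes. Consequently, the Fourier transform $\widehat{\mu}(w)=\int_K e^{\mathrm{i}\,w^\top x}\,\mathrm{d}\mu(x)$ is identically zero on $\mathbb{R}^n$, and by injectivity of the Fourier transform on finite signed measures this forces $\mu=0$, contradicting $\Lambda\neq 0$.

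The delicate point to watch is that ``bounded and nonconstant'' is strictly weaker than Cybenko's ``sigmoidal/discriminatory'' assumption, so one cannot directly invoke the half-plane indicator trick from \cite{Hornik1989}; the mollification--Fourier route of \cite{Hornik1991} is what closes this gap, and the real technical labor lies in justifying the passage to the limit and the exchange of integration that yield a vanishing Fourier transform. Once the contradiction is obtained, the density $\overline{\mathcal{N}_\sigma(K)}=C(K)$ is established, which is precisely the universal approximation statement of Lemma~\ref{lemma1}.
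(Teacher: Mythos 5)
The paper does not actually prove Lemma~\ref{lemma1}: it is imported verbatim from \cite{Hornik1989,Hornik1991} as a known result, so there is no in-paper argument to compare against. Judged on its own terms, your outline reproduces the standard functional-analytic architecture of those references (and of Cybenko): reduce to density of the single-hidden-layer span in $C(K)$, invoke Hahn--Banach and Riesz representation to obtain a nonzero signed measure $\mu$ annihilating all ridge functions $\sigma(w^\top x+b)$, and derive a contradiction by showing $\mu=0$. That skeleton is correct, and you correctly identify that the entire difficulty is concentrated in the ``discriminatory'' step, i.e., deducing $\mu=0$ from $\int_K\sigma(w^\top x+b)\,\mathrm{d}\mu(x)=0$.

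That step, however, is asserted rather than established, and as literally written it does not follow. Convolving the translates $b\mapsto\sigma(w^\top x+b)$ with a compactly supported mollifier $\phi$ only shows that $\int_K(\sigma*\phi)(w^\top x+b)\,\mathrm{d}\mu(x)=0$; this gives you the translates and (by rescaling $w$) dilates of the \emph{specific} smooth bounded profile $\sigma*\phi$, not ``every bounded continuous function of the scalar variable $w^\top x$,'' and in particular it does not directly produce the characters $e^{\mathrm{i}w^\top x}$ as admissible test functions. The correct closing of the argument works at the level of the pushforward measure $\mu_w$ of $\mu$ under $x\mapsto w^\top x$: the identity $\int(\sigma*\phi)(\lambda t+b)\,\mathrm{d}\mu_w(t)=0$ for all $b$ is a vanishing convolution, whose distributional Fourier transform forces $\widehat{\mu_w}$ to vanish on the support of $\widehat{\sigma*\phi}$; nonconstancy of $\sigma$ guarantees this support contains a nonzero frequency, and sweeping the dilation parameter $\lambda$ then covers all nonzero frequencies, whence $\widehat{\mu_w}\equiv 0$ and finally $\widehat{\mu}\equiv 0$. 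Two further small points: working in $C(K)$ with Riesz representation tacitly requires $\sigma$ to be continuous (for merely bounded nonconstant $\sigma$, \cite{Hornik1991} only yields $L^p$ density, so the lemma as stated is slightly stronger than what your setting delivers), and the claim that deeper architectures ``automatically'' contain the one-hidden-layer class needs a word of justification, since a bounded activation cannot represent the identity map exactly in the extra layers.
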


\begin{remark}
	From Lemma \ref{lemma1}, a standard multilayer feedforward network can universally approximate any function. Given that RNN activation functions are a set of simple functions, any complex continuous nonlinear function $f(\cdot)$ can be uniformly approximated by polynomial functions. Thus, the RNN with structure in \eqref{eq1.2} can approximate any continuous function through its activation functions.
\end{remark}

Let us then collect the training parameters as $\theta_f\coloneqq\{(W_f^i,b_f^i)\}_{i=1}^{l_f}$, and $\theta_g\coloneqq\{(W_g^i,b_g^i)\}_{i=1}^{l_g}$. According to Lemma~\ref{lemma1}, the RNNs in \eqref{eq1.2}--\eqref{eq1.3} with parameters $\theta_f$ and $\theta_g$ can be used to describe the nonlinear functions in \eqref{eq1.1} and \eqref{eq1.4}. To simplify notation, we let $\mathcal{N}_x$ and $\mathcal{N}_y$ denote the RNNs with structure in \eqref{eq1.2} and \eqref{eq1.3}, respectively. 

By making use of the introduced RNNs, we can then rewrite the system in \eqref{eq1.1} and \eqref{eq1.4} as follows
\begin{equation}\label{eq1.5}
	\begin{aligned}
		x(t+1)&=\mathcal{N}_{x,s_t}(x(t),u(t),\theta_{f,s_t})+\zeta(t),\\
		y(t)&=\mathcal{N}_{y,s_t}(x(t),u(t),\theta_{g,s_t})+\xi(t),
	\end{aligned}
\end{equation}
where $\mathcal{N}_{x,s_t}$ and $\mathcal{N}_{y,s_t}$ are two groups of RNNs, each of them containing $K$-RNNs. Accordingly, $\theta_{f,s_t}$ and $\theta_{g,s_t}$ are the associated NN parameters. 

\subsection{Problem description}
Assume that we are able to collect a dataset consisting of $T$ samples of the system input and output (not necessarily a $T$-long trajectory), stacked together as $\bm{y}=\bm{y}_{1:T}\coloneqq\{y(1),\cdots,y(T)\}$ and $\bm{u}=\bm{u}_{1:T}\coloneqq\{u(1),\cdots,u(T)\}$. To establish our main technical results on the identification of the switching nonlinear state-space system in \eqref{eq1.1}--\eqref{eq1.4}, we make use of the following assumptions:
\begin{standing}\label{assum:1}
	For any $T>0$, the switching sequence $\bm{S}\coloneqq\{s_1,\cdots,s_T\}$, the system parameters $\Theta\coloneqq\{\theta_{f,1},\cdots,\theta_{f,K},\theta_{g,1},\cdots,\theta_{g,K}\}$, and the system inputs $\bm{u}$ are all independent among them, i.e.,
	\begin{align*}
		\mathbb P[\bm{S}\vert\Theta,\bm{u}]=\mathbb P[\bm{S}], \text{and } \,
		\mathbb P[\Theta\vert\bm{S},\bm{u}]=\mathbb P[\Theta].
	\end{align*}
\end{standing}
\begin{standing}\label{assum:2}
	The switching sequence satisfies the Markov property, i.e., for any $t\in\{1,\ldots,T\}$,
	\begin{align*}
		\mathbb P[s_t\vert s_{t-1},\cdots,s_1]=\mathbb P[s_t\vert s_{t-1}]=\pi_{s_{t},s_{t-1}}.
	\end{align*}
\end{standing}

Note that the aforementioned assumptions have simplified the complexity of the system, while possibly imposing restrictions. For instance, Standing Assumption~\ref{assum:1} implies that there is no correlation between the system parameters and the inputs. Similarly, the Markov property (i.e., Standing Assumption~\ref{assum:2}) assumes that the switching sequence depends solely on the previous state, without considering the influence of a longer time span. Similar conditions, however, have already been postulated in the cognate literature -- see, e.g., \cite{Bemporad2018,Piga2020}.

In view of Standing Assumption~\ref{assum:2}, the switching probability can then be characterized using a mode transition matrix $\Pi\in\mathbb{R}^{K \times K}$, which can be defined as follows:
\begin{align*}
	\Pi=\begin{bmatrix}
		\pi_{1,1}&\pi_{1,2}&\cdots&\pi_{1,K}\\
		\pi_{2,1}&\pi_{2,2}&\cdots&\pi_{2,K}\\
		\vdots&\vdots&\ddots&\vdots\\
		\pi_{K,1}&\pi_{K,2}&\cdots&\pi_{K,K}
	\end{bmatrix},
\end{align*}
and amounts to a row stochastic matrix since it satisfies
\begin{align}\label{eq1.29}
	\sum_{i=1}^{K}\pi_{i,j}=1,\text{ for all } j=1,\cdots,K.
\end{align}
By making use of the available data samples $\bm{u}$ and $\bm{y}$ collected from \eqref{eq1.1}--\eqref{eq1.4}, our goal is thus to determine $\mathcal N_x$ and $\mathcal N_y$ in \eqref{eq1.5} with a known number of system modes $K$.

To this end, given the model parameters $\Theta\in\mathbb{R}^d$, a suitable cost function to be minimized for the identification of the switching nonlinear systems consists of three parts: a pure loss function $\ell:\mathbb{R}^{n_u}\times\mathbb{R}^{n_y}\times\mathbb{R}^d\times\{1,\ldots,K\}^T\to\bar{\mathbb R}$, a regularization term for the parameters $r:\mathbb{R}^d\to\bar{\mathbb R}$, and a loss involving the mode sequence $\mathcal{L}:\{1,\ldots,K\}^T\to\bar{\mathbb R}$:
\begin{align}
	J(\bm{y},\bm{u},\Theta,\bm{S})\!=\!\ell(\bm{y},\bm{u},\Theta,\bm{S})+r(\Theta)+\mathcal{L}(\bm{S})\label{eq1.7}
\end{align}
Under Standing Assumption~\ref{assum:2}, from \cite{Bemporad2018} we note that $\mathcal{L}$ reads as: $\mathcal{L}(\bm{S})=\log\pi_{s_0}+\sum_{i=1}^{T}\log\pi_{s_t,s_{t-1}}$.

We then aim at determining the weights of the RNNs in \eqref{eq1.2}--\eqref{eq1.3} by minimizing $J(\bm{y},\bm{u},\Theta,\bm{S})$ w.r.t. $\Theta$, transition matrix $\Pi$ and switching sequence $\bm{S}$. Specifically, we will make use of the EM algorithm, an iterative method yielding an estimate of the underlying decision variables at each iteration, as detailed in the following section.

\section{The EM framework for the identification of switching nonlinear systems}\label{sec3}
By making use of the training data $\bm{u}$, $\bm{y}$ and the initial state $x(0)$, we aim at determining the parameters $\Theta$ of the RNNs $\mathcal N_x$ and $\mathcal N_y$ in \eqref{eq1.5}. The EM algorithm can iteratively obtain an estimate of $\Theta$ when the original system includes unobservable hidden variables, such as the switching sequence $\bm{S}$. 

Let us then denote the parameter estimate at the $k$-th iteration as $\Theta^k$. The likelihood function associated to the data collected over $T$ of $\bm{y}$, $\bm{x}=\bm{x}_{1:T}\coloneqq\{x(1),\ldots,x(T)\}$, $\bm{S}$, and $\Theta$, can be expressed as:
\begin{align}\label{eq1.28}
	\log\mathbb{P}[\bm{y},\bm{x},\bm{S},\Theta]=\log\mathbb{P}[\bm{y}]+\log\mathbb{P}[\bm{x},\bm{S},\Theta\vert\bm{y}].
\end{align}
Given some $\Theta^k$,  the conditional expectation of $ \mathbb P_{\Theta^k}[\bm{x},\bm{S}\vert\bm{y}]$ is abbreviated as
\begin{align*}
		\mathbb{E}_{\Theta_k}[\cdot]=\int\sum_{\bm{S}}(\cdot)\mathbb P_{\Theta^k}[\bm{x},\bm{S}\vert\bm{y}]d(\bm{x})
\end{align*}
Then, take the expectation operator $\mathbb{E}_{\Theta_k}[\cdot]$ on both sides of \eqref{eq1.28}:
\begin{align}\label{eq1.34}
	\mathbb{E}_{\Theta^k}[\log\mathbb{P}[\bm{y},\bm{x},\bm{S},\Theta]]=Q_1+Q_2+Q_3,
\end{align}
where
\begin{align*}
	Q_1=&\sum_{i=1}^{T}\int\sum_{s_i}\log\mathbb{P}[y(i)\vert x(i),u(i),\theta_{g,s_i}]\\
	&\hspace{3.7cm}\mathbb{P}_{\Theta^k}[x(i),s(i)\vert \bm{y}]d(x(i))\\
	&+\sum_{i=1}^{T-1}\int\int\sum_{s_i}\log\mathbb{P}[x(i+1)\vert x(i),u(i),\theta_{f,s_i}]\\
	&\hspace{1.2cm}\mathbb{P}_{\Theta^k}[x(i+1),x(i),s(i)\vert \bm{y}]d(x(i+1))d(x(i)),\\
	Q_2=&\sum_{i=1}^{K}\log\mathbb{P}[\theta_{f,s_i}]\mathbb{P}[\theta_{g,s_i}],\\
	Q_3=&\sum_{j=1}^{K}\sum_{l=1}^{K}\mathbb{E}_{\Theta^k}\big[\sum_{i=1}^{T}\mathbb{I}(j,s_i)\mathbb{I}(l,s_{i-1})\big]\log\pi_{j,l}+\log\pi_{s_0}.
\end{align*}
With $\bm{\theta}_f\coloneqq\{\theta_{f,1},\cdots,\theta_{f,K}\}$ and $\bm{\theta}_g\coloneqq\{\theta_{g,1},\cdots,\theta_{g,K}\}$, we show next that the cost in \eqref{eq1.7} can be obtained in the maximum likelihood estimation framework

\begin{prop}\label{prop1}
	Minimizing the cost function $J(\bm{y},\bm{u},\Theta,\bm{S})$ in \eqref{eq1.7} w.r.t. $\Theta$, where
	\begin{subequations}\label{subeq1.6}
		\begin{align}
			\ell(&\bm{y},\bm{u},\Theta,\bm{S})=-\sum_{t=1}^{T}\log\mathbb{P}[y(t)\vert x(t),u(t),\theta_{g,s_t}]\nonumber\\
			&\hspace{1.5cm}-\sum_{t=1}^{T-1}\log\mathbb{P}[x(t+1)\vert x(t), u(t),\theta_{f,s_{t}}],\label{subeq1.61}\\
			r&(\Theta)=-\sum_{i=1}^{K}(\log\mathbb{P}[\theta_{f,i}]+\log\mathbb{P}[\theta_{g,i}]),\\
			\mathcal{L}&(\bm{S})=-\log\mathbb{P}[\bm{S}]=-\sum_{t=1}^{T}\log\pi_{s_t,s_{t-1}}-\log\pi_{s_0},\label{subeq1.62}
		\end{align}
	\end{subequations}
	is equivalent to maximizing the joint probability density function $\mathbb{P}[\bm{y},\bm{x},\bm{S},\Theta]$.
\end{prop}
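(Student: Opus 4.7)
The plan is to show that $-\log\mathbb{P}[\bm{y},\bm{x},\bm{S},\Theta]$ coincides, up to an additive constant that does not depend on the optimization variables, with the cost $J(\bm{y},\bm{u},\Theta,\bm{S})$ defined through \eqref{subeq1.6}. Since the logarithm is strictly monotone, this will immediately establish the claimed equivalence between maximizing the joint density and minimizing $J$.

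First I would factorize the joint density via the chain rule, treating the inputs $\bm{u}$ as given exogenous quantities:
\begin{align*}
\mathbb{P}[\bm{y},\bm{x},\bm{S},\Theta] =\ & \mathbb{P}[\bm{y}\vert\bm{x},\bm{S},\Theta,\bm{u}]\,\mathbb{P}[\bm{x}\vert\bm{S},\Theta,\bm{u}]\\
&\cdot\mathbb{P}[\bm{S}\vert\Theta,\bm{u}]\,\mathbb{P}[\Theta\vert\bm{u}].
\end{align*}
Standing Assumption~\ref{assum:1} collapses the last two factors to $\mathbb{P}[\bm{S}]\mathbb{P}[\Theta]$. Prior independence across subsystems gives $\mathbb{P}[\Theta]=\prod_{i=1}^{K}\mathbb{P}[\theta_{f,i}]\mathbb{P}[\theta_{g,i}]$, whose negative logarithm is exactly the regularizer $r(\Theta)$. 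Standing Assumption~\ref{assum:2} together with the definition of $\Pi$ yields $\mathbb{P}[\bm{S}]=\pi_{s_0}\prod_{t=1}^{T}\pi_{s_t,s_{t-1}}$, and taking $-\log$ reproduces $\mathcal{L}(\bm{S})$ in \eqref{subeq1.62}.

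Next, I would invoke the state-space structure of \eqref{eq1.5} together with the white Gaussian noise condition to obtain the two conditional-independence factorizations that are standard for hidden-Markov state-space models,
\begin{align*}
\mathbb{P}[\bm{y}\vert\bm{x},\bm{S},\Theta,\bm{u}] &= \prod_{t=1}^{T}\mathbb{P}[y(t)\vert x(t),u(t),\theta_{g,s_t}],\\
\mathbb{P}[\bm{x}\vert\bm{S},\Theta,\bm{u}] &= \mathbb{P}[x(1)]\prod_{t=1}^{T-1}\mathbb{P}[x(t+1)\vert x(t),u(t),\theta_{f,s_t}].
\end{align*}
Taking negative logarithms of these two products produces, modulo the $x(1)$ prior which is constant w.r.t. the optimization variables, exactly the loss $\ell(\bm{y},\bm{u},\Theta,\bm{S})$ in \eqref{subeq1.61}. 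Assembling the four pieces gives $-\log\mathbb{P}[\bm{y},\bm{x},\bm{S},\Theta]=\ell(\bm{y},\bm{u},\Theta,\bm{S})+r(\Theta)+\mathcal{L}(\bm{S})+\text{const}$, which is $J$ up to an additive constant and concludes the argument.

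The main obstacle will be rigorously justifying the two conditional-independence factorizations used above: they rest on the whiteness of $\zeta(t)$ and $\xi(t)$ (Standing Assumption on the noises), on the causal structure of the RNN model \eqref{eq1.5}, and on the fact that at every time $t$ only the parameters indexed by the active mode $s_t$ enter the transition and output densities. Once these Markovian decompositions are in place, the remainder is routine manipulation of products of densities through the logarithm.
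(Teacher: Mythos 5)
Your proposal is correct and follows essentially the same route as the paper's proof: factorize the joint density by the chain rule, invoke Standing Assumptions 1 and 2 to separate the priors on $\Theta$ and $\bm{S}$ and to obtain the Markov-chain product, factor the output and state likelihoods over time, and take logarithms. Your version is in fact slightly more explicit about where each assumption enters and about the $x(1)$ prior being a constant, but the substance is identical.
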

The proof of the Proposition \ref{prop1} is shown in Appendix A.
\begin{remark}
	Proposition \ref{prop1} establishes a connection between the cost function and the joint probability density function (pdf). Specifically, it transforms a deterministic training problem into a probabilistic one. We will therefore be able to leverage tools from probability theory and statistics to address our challenging identification problem, as the fact that the pdf contains unknown states  $\bm{S}$ will be resolved iteratively by our EM-based algorithm.
\end{remark}

It thus follows from Proposition \ref{prop1} that, rather than minimizing $J(\bm{y},\bm{u},\Theta,\bm{S})$ with ingredients as in \eqref{subeq1.6}, we can equivalently maximize the following objective function:
\begin{align*}
	Q(\Theta,\Theta^k)=\mathbb{E}_{\Theta^k}[\log\mathbb{P}[\bm{y},\bm{x},\bm{S},\Theta]].
\end{align*}
Conceptually, the proposed EM-based methodology then consists in the two steps summarized next:
\begin{enumerate}
	\item \textbf{Expectation} (E-step): Calculate the optimal posterior estimate values of the implicit state (switching sequence $\bm{S}$) using the parameter estimates obtained in the M-step at the previous iteration, and compute the expectation to obtain $Q(\Theta,\Theta^k)$;
	\item \textbf{Maximization} (M-step): The objective function $Q(\Theta,\Theta^k)$ is maximized w.r.t. $\Theta$ and $\Pi$ by using an EKF-based method. Then, the estimate of the parameter $\Theta$ is updated to obtain $\Theta^{k+1}$.
\end{enumerate}

Next, each one of the steps above will be detailed
starting from the M-step in \S \ref{sec3.1}, which will introduce the ingredients required for the E-step in \S \ref{sec3.2}.

\subsection{The maximization step}\label{sec3.1}
Given its expression, maximizing $Q(\Theta,\Theta^k)$ yields the optimal parameter estimates and transition matrix $\Pi$ together. By starting with $\Pi$, it is clear that a maximizer to $Q(\Theta,\Theta^k)$ w.r.t. $\Pi$ can be equivalently found as a maximizer to $Q_3$ subject to \eqref{eq1.29}. Then, the entries of the transition matrix can be calculated, for all $l=1,\ldots,K$, as: 
\begin{align}\label{eq1.30}
	\pi_{j,l}&=\frac{\mathbb{E}_{\Theta^k}\big[\sum_{i=1}^{T}\mathbb{I}(j,s_i)\mathbb{I}(l,s_{i-1})\big]}{\sum_{j=1}^{K}\mathbb{E}_{\Theta^k}\big[\sum_{i=1}^{T}\mathbb{I}(j,s_i)\mathbb{I}(l,s_{i-1})\big]}
\end{align} 
Then, the optimal parameter estimates can be given by:
\begin{align}\label{eq1.8}
	\Theta^{k+1}=\arg\max_{\Theta}~Q(\Theta,\Theta^k).
\end{align}
To this end, a common approach is the maximum likelihood (ML) technique \cite{Umenberger2018} that requires one to calculate the extreme points of the parameter (correspond to the optimal estimates). Therefore, the gradient descent (GD) method can be applied to solve the maximization problem \eqref{eq1.8} iteratively with update
\begin{align*}
	\Theta^{n+1}=\Theta^n-\alpha_n (\partial Q(\Theta,\Theta^n)/\partial \Theta),
\end{align*}
where $n=1,\ldots,N$ is the training epoch index, with associated learning rate $\alpha_n$. The gradients of $Q(\Theta,\Theta^k)$ w.r.t. all parameters in the RNNs \eqref{eq1.2}--\eqref{eq1.3} can then be calculated by applying the chain rule. However, the GD method has few disadvantages, such as slow convergence rate when the gradient is close to zero, and over-fitting, especially in case of noise-corrupted data. Although there are currently many well-known algorithms and NN structures proposed to address these problems, such as SGD, Adam, BiGRU; however, the objective function contains unknown state parameters $\bm{x}$ and $\bm{S}$, which poses several challenges to their differentiation.

Inspired by \cite{Bemporad2023}, we propose instead to employ an EKF to recursively update the parameter $\Theta$ and state trajectory $\bm{x}$. For computational convenience, the parameter will be vectorized and the dynamics in \eqref{eq1.5} rewritten as follows:
\begin{align}\label{eq1.9}
	x(t+1)&=\mathcal{N}_{x,s_t}(x(t),u(t),\theta_{f,s_t})+\zeta(t),\nonumber\\
	y(t)&=\mathcal{N}_{y,s_t}(x(t),u(t),\theta_{g,s_t})+\xi(t),\\
	\vartheta(t+1)&=\vartheta(t)+\eta(t)\nonumber,
\end{align}
with $\vartheta\coloneqq[\vartheta_{f,s_t}^\top~\vartheta_{g,s_t}^\top]^\top$, and $\vartheta_{f,s_t}\coloneqq\textrm{vec}(\theta_{f,s_t})$, $\vartheta_{g,s_t}\coloneqq\textrm{vec}(\theta_{g,s_t})$. In \eqref{eq1.9} we have implicitly assumed that the parameters vector dynamics $\vartheta$ is affected by Gaussian white noise $\eta(t)$ with zero mean and variance $\Sigma_\vartheta(t)$

We discuss next the update process of the EKF at time $t$. First, the nonlinear functions $\mathcal{N}_{x,s_t}$ and $\mathcal{N}_{y,s_t}$ need to be expanded in a Taylor series to perform the first-order linearization of the system \eqref{eq1.9}. The corresponding Jacobian matrices read as follows:
\begin{align}
	F(t)&=\left.\frac{\partial\mathcal{N}_{x,s_t}}{\partial\begin{bmatrix}
			x\\
			\vartheta
	\end{bmatrix}}\right|_{\hat{x}(t),\hat{\vartheta}(t),u(t)}=\begin{bmatrix}
		\frac{\partial\mathcal{N}_{x,s_t}}{\partial x}&\frac{\partial\mathcal{N}_{x,s_t}}{\partial \vartheta_{f,s_t}}&0\\
		0&I&0\\
		0&0&I
	\end{bmatrix},\label{eq1.10}\\
	H(t)&=\left.\frac{\partial\mathcal{N}_{y,s_t}}{\partial\begin{bmatrix}
			x\\
			\vartheta
	\end{bmatrix}}\right|_{\hat{x}(t),\hat{\vartheta}(t),u(t)}=\begin{bmatrix}
		\frac{\partial\mathcal{N}_{y,s_t}}{\partial x}&0&\frac{\partial\mathcal{N}_{y,s_t}}{\partial \vartheta_{g,s_t}}
	\end{bmatrix}.\label{eq1.11}
\end{align}
Then, the prior estimates of $\vartheta(t+1)$ and state $x(t+1)$, denoted as $\hat{\vartheta}^-(t+1)$ and $\hat{x}^-(t+1)$, can be computed by exploiting the forward propagation process of the RNNs:
\begin{align}\label{eq1.12}
	\begin{bmatrix}
		\hat{x}^-(t+1)\\
		\hat{\vartheta}^-(t+1)
	\end{bmatrix}=\begin{bmatrix}
		\mathcal{N}_{x,s_t}(\hat{x}(t),u(t),\hat{\vartheta}_{f,s_t}(t))\\
		\hat{\vartheta}(t)
	\end{bmatrix}.
\end{align}
With this regard, the measurement equation can then be used to update and correct the prior estimates of the parameter $\hat{\vartheta}^-(t+1)$ and state $\hat{x}^-(t+1)$. In general, they are referred to as the posterior estimates of the parameter and state, denoted as $\hat{\vartheta}(t+1)$ and $\hat{x}(t+1)$:
\begin{align}\label{eq1.13}
	\begin{bmatrix}
		\hat{x}(t+1)\\
		\hat{\vartheta}(t+1)
	\end{bmatrix}=\begin{bmatrix}
		\hat{x}^-(t+1)\\
		\hat{\vartheta}^-(t+1)
	\end{bmatrix}+\varGamma(t)e(t),
\end{align}
where $\varGamma(t)$ is the Kalman gain, $e(t)$ is the posterior estimation error that can be calculated by the forward propagation of $\mathcal{N}_{y,s_t}$ and the true output, i.e.,
\begin{align}\label{eq1.14}
	e(t)=y(t)-\mathcal{N}_{y,s_t}(\hat{x}^-(t),u(t),\hat{\vartheta}^-_{g,s_t}(t)).
\end{align}
Thus, the optimal Kalman gain $\varGamma(t)$ has to be found so as to minimizes $e(t)$, which is equivalent to minimizing the covariance matrix of the posterior estimation error. Let us then define the prior and posterior error covariance matrices as:
\begin{align*}
	P^-(t)&\!\coloneqq\!\mathbb{E}\left[\left(\begin{bmatrix}
		x(t)\\
		\vartheta(t)
	\end{bmatrix}\!-\!\begin{bmatrix}
		\hat{x}^-(t)\\
		\hat{\vartheta}^-(t)
	\end{bmatrix}\right)\left(\begin{bmatrix}
		x(t)\\
		\vartheta(t)
	\end{bmatrix}\!-\!\begin{bmatrix}
		\hat{x}^-(t)\\
		\hat{\vartheta}^-(t)
	\end{bmatrix}\right)^\top\right],\\
	P(t)&\!\coloneqq\!\mathbb{E}\left[\left(\begin{bmatrix}
		x(t)\\
		\vartheta(t)
	\end{bmatrix}\!-\!\begin{bmatrix}
		\hat{x}(t)\\
		\hat{\vartheta}(t)
	\end{bmatrix}\right)\left(\begin{bmatrix}
		x(t)\\
		\vartheta(t)
	\end{bmatrix}\!-\!\begin{bmatrix}
		\hat{x}(t)\\
		\hat{\vartheta}(t)
	\end{bmatrix}\right)^\top\right].
\end{align*}
Then, we have:
\begin{align}\label{eq1.15}
	P^-(t)&\!=\!\mathbb{E}\left[\left(F(t)\left(\begin{bmatrix}
		x(t-1)\\
		\vartheta(t-1)
	\end{bmatrix}\!-\!\begin{bmatrix}
		\hat{x}(t-1)\\
		\hat{\vartheta}(t-1)
	\end{bmatrix}\right)\!+\!\begin{bmatrix}
		\zeta(t)\\
		\eta(t)
	\end{bmatrix}\right)\right.\nonumber\\
	&\left.\left(F(t)\left(\begin{bmatrix}
		x(t-1)\\
		\vartheta(t-1)
	\end{bmatrix}\!-\!\begin{bmatrix}
		\hat{x}(t-1)\\
		\hat{\vartheta}(t-1)
	\end{bmatrix}\right)\!+\!\begin{bmatrix}
		\zeta(t)\\
		\eta(t)
	\end{bmatrix}\right)^\top\right]\nonumber\\
	&=F(t)P(t-1)F^\top(t)+\begin{bmatrix}
		\Sigma_1(t)&0\\
		0&\Sigma_\vartheta(t)
	\end{bmatrix},
\end{align}
and
\begin{align}\label{eq1.16}
	P(t)&=\mathbb{E}\left[\left((I-\varGamma(t)H(t))(x(t)-\hat{x}^-(t))-\varGamma(t)\xi(t)\right)\right.\nonumber\\
	&\left.\left((I-\varGamma(t)H(t))(x(t)-\hat{x}^-(t))-\varGamma(t)\xi(t)\right)^\top\right]\nonumber\\
	&=P^-(t)-P^-(t)H^\top(t)\varGamma^\top(t)-\varGamma(t)H(t)P^-(t)\nonumber\\
	&+\varGamma(t)H(t)P^-(t)H^\top(t)\varGamma^\top(t)+\varGamma(t)\Sigma_2(t)\varGamma^\top(t).
\end{align}
Moreover, note that minimizing the covariance matrix $P(t)$ is equivalent to minimizing its trace. Thus, the optimal $\varGamma(t)$ assumes the following closed form expression:
\begin{align}\label{eq1.17}
	\varGamma(t)=P^-(t)H^\top(t)\left[H(t)P^-(t)H^\top(t)+\Sigma_2(t)\right]^{-1}.
\end{align} 
Finally, by substituting the obtained Kalman gain into \eqref{eq1.10}, the updated error covariance matrices can be calculated for the subsequent iteration as:
\begin{align}\label{eq1.18}
	P(t)=(I-\varGamma(t)H(t))P^-(t).
\end{align} 
The instructions in \eqref{eq1.10}-\eqref{eq1.18} then summarize the iterative training process based on EKF for each submodel. Before performing \eqref{eq1.10}-\eqref{eq1.18}, it is however key to cluster the training data by using the results provided in the E-step.  We will detail this process in \S \ref{sec3.2}. We now show that performing the aforementioned submodel EKF-based training steps, for each submodel, is equivalent to maximizing the objective function $Q(\Theta,\Theta^k)$:
\begin{thm}\label{th1}
	Performing the steps in \eqref{eq1.10}-\eqref{eq1.18}, separately for each submodel, is equivalent to maximizing the objective function $Q(\Theta,\Theta^k)$ w.r.t. $\Theta$.	
\end{thm}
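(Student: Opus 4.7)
My plan is to show that the maximization of $Q(\Theta,\Theta^k)$ over $\Theta$ decouples across the $K$ submodels and then to argue that the EKF recursion in \eqref{eq1.10}--\eqref{eq1.18} solves each submodel's subproblem, so that the two procedures produce the same parameter estimate. First I would observe that $Q_3$ does not depend on $\Theta$, hence only $Q_1$ and $Q_2$ enter the $\Theta$-optimization. Under the Gaussian noise assumption, both log-densities in $Q_1$ become, up to additive constants, quadratic forms of the residuals $y(i)-\mathcal{N}_{y,s_i}(x(i),u(i),\theta_{g,s_i})$ and $x(i+1)-\mathcal{N}_{x,s_i}(x(i),u(i),\theta_{f,s_i})$ weighted by $\Sigma_2^{-1}(i)$ and $\Sigma_1^{-1}(i)$, respectively. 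Using the mode posteriors produced by the preceding E-step to split each time index across the $K$ possible values of $s_i$, the expression $Q_1+Q_2$ rearranges as $\sum_{j=1}^{K}J_j(\theta_{f,j},\theta_{g,j})$, where each $J_j$ is a weighted nonlinear least-squares cost over the data assigned to mode $j$, augmented by the prior-induced regularizer $-\log\mathbb{P}[\theta_{f,j}]-\log\mathbb{P}[\theta_{g,j}]$ inherited from $Q_2$.

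Next I would invoke the well-known equivalence between the EKF on a joint state--parameter system and recursive Gauss--Newton applied to the negative log-posterior. In the augmented dynamics \eqref{eq1.9}, the Jacobians $F(t)$ and $H(t)$ in \eqref{eq1.10}--\eqref{eq1.11} furnish precisely the first-order model needed to build a quadratic surrogate of each per-sample log-likelihood around the current iterate $(\hat{x}(t),\hat{\vartheta}(t))$. The prior covariance update \eqref{eq1.15}, the optimal gain \eqref{eq1.17}, and the posterior covariance \eqref{eq1.18} then coincide with one step of Gauss--Newton on this quadratic model in information form, as formalized in \cite{Humpherys2012,Bemporad2023}. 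Applied to the subset of time indices assigned to mode $j$, this recursion therefore drives $\hat{\vartheta}_{f,j},\hat{\vartheta}_{g,j}$ to a stationary point of $J_j$, i.e., it performs exactly the $j$-th submodel maximization embedded in $Q$.

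Combining the two points, the overall EKF scheme applied separately to each submodel returns a stationary point of $\sum_{j=1}^{K}J_j$, which by the decomposition of the first paragraph is a stationary point of $-Q(\cdot,\Theta^k)$ in $\Theta$; this yields the claimed equivalence. I expect the main obstacle to be in the second step: rigorously identifying the EKF gain and covariance updates with a Gauss--Newton step on the expected log-posterior, and in particular matching the parameter-drift covariance $\Sigma_\vartheta(t)$ with a local quadratic model of the prior so that the regularizer $r(\Theta)$ in \eqref{subeq1.6} is correctly reproduced. A subsidiary subtlety is that the filtered trajectory $\hat{x}(1{:}T)$ produced by the EKF must be consistent with the marginalization of $\bm{x}$ inside $Q_1$ through $\mathbb{P}_{\Theta^k}[x(i)\vert\bm{y}]$; this can be handled by noting that, under the EKF's Gaussian approximation, the filtered mean coincides, to the order of the linearization, with the conditional mean against which $Q_1$ is evaluated, so that the Gauss--Newton direction on the surrogate is the same as that on $Q$.
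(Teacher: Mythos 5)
Your proposal is correct and follows essentially the same route as the paper: the paper's proof also decouples the problem per submodel (via Proposition~\ref{prop1}) and then shows the EKF prediction/update equations arise as one recursive (Gauss--)Newton step on the one-step prior and posterior losses, with $P^-(n)$ and $P(n)$ appearing as blocks of the inverse Hessian in information form. The only difference is that the paper carries out the gradient/Hessian computation explicitly rather than citing the EKF--Gauss--Newton equivalence, and it too glosses over the two subtleties you flag (reproducing $r(\Theta)$ from the parameter-drift covariance, and replacing the marginalization over $\bm{x}$ in $Q_1$ by the filtered trajectory).
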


The proof of Theorem~\ref{th1} is shown in Appendix \ref{sec:proofs}.

Theorem~\ref{th1} proves that using EKF to update parameters and state variables is equivalent to directly maximizing the objective function $Q(\Theta,\Theta^k)$, thus simplifying heavily the parameter estimation process. The proposed EKF-based training alleviates the computational difficulties that may arise when directly maximizing the objective function through a recursive procedure. 

\subsection{The expectation step}\label{sec3.2}
The maximization step described in \S \ref{sec3.1} requires an expression for the expectation of the switching sequence. Specifically, this shall be obtained based on the parameter $\Theta^k$ and the state variables obtained from the maximization step at the previous iteration.

Due to the Markov property of the switching sequence, which means that the hidden state $s_t$ at the current time is only related to the previous time, we have:
\begin{align}\label{eq1.19}
	\mathbb{P}[s_t=i]&=\sum_{j=1}^{K}\mathbb{P}[s_t=i,s_{t-1}=j]\nonumber\\
	&=\sum_{j=1}^{K}\mathbb{P}[s_t=i\vert s_{t-1}=j]\mathbb{P}[s_{t-1}=j].
\end{align}

Then, the posterior pdf of the switching sequence can be calculated by using the Bayes' rule, i.e., $\mathbb{P}[\bm{S}\vert \bm{y},\bm{x},\Theta]=\mathbb{P}[\bm{y},\bm{x},\Theta,\bm{S}]/\mathbb{P}[\bm{y},\bm{x},\Theta]$.
By maximizing the pdf of the switching sequence $\bm{S}$, we have:
\begin{align}\label{eq1.23}
	\hat{\bm{S}}&=\arg\max_{\bm{S}}\log\mathbb{P}[\bm{S}\vert \bm{y},\bm{x},\Theta]\nonumber\\
	&=\arg\max_{\bm{S}}\log\mathbb{P}[\bm{y},\bm{x},\Theta,\bm{S}]\nonumber\\
	&=\arg\max_{\bm{S}}\ell(\bm{y},\bm{x},\Theta,\bm{S})+r(\Theta)\nonumber\\
	&+\sum_{t=1}^{T}\sum_{i=1}^{K}\sum_{j=1}^{K}\mathbb{I}(s_t,i)\mathbb{I}(s_{t-1},j)\log\pi_{i,j}
\end{align}
Unfortunately, the calculation of \eqref{eq1.23} requires to try all possibilities of $\bm{S}$, which is however computationally challenging to obtain. To derive the estimate for the switching sequence, a number of $K^{T+1}$ calculations are thus necessary. To alleviate the resulting computational burden, we adopt a moving window approach for calculating these posterior probabilities. In words, this method maintains the estimated switching mode from the previous time instant, while calculating the pdf for the switching mode at the current time instant. Successively, the estimation of the current mode is obtained by maximizing the pdf of the switching mode at the current time instant.

We now provide a specific procedure starting from $t=0$. Specifically, we note that the loss of the switching sequence \eqref{subeq1.62} is only related to the initial mode. In fact, for all possible modes the optimal one can be chosen as: 
\begin{align}\label{eq1.24}
	\hat{s}_1&=\arg\max_{s_1}\mathbb{P}[s_1\vert y(1),x(1),u(1),\Theta]\nonumber\\
	&=\arg\min_{s_1}\ell(y(1),u(1),\theta_{f,s_1},\theta_{g,s_1})+r(\Theta)+\mathcal{L}(s_1).
\end{align}
Note that, when the initial mode is determined, the step-forward can then be computed for all possible switching sequences $\bm{S}_{2:T}$. Assume the length of the considered time window is $T_w$, and denote the time window as $\bm{W}_t=\{t, t+1,\ldots,t+T_w-1\}$. Then, the switching sequence $\bm{S}_{\bm{W}_t}$ can be computed by maximizing the local posterior pdf in \eqref{eq1.23}:
\begin{align}\label{eq1.25}
	\hat{\bm{S}}_{\bm{W}_t}&=\arg\max_{\bm{S}_{\bm{W}_t}}\mathbb{P}[\bm{S}_{\bm{W}_t}\vert \bm{y}_{\bm{W}_t},\bm{x}_{\bm{W}_t},\bm{u}_{\bm{W}_t},\Theta]\nonumber\\
	&\hspace{1.2cm}\textrm{s.t.}\quad s_{t-1}=\hat{s}_{t-1},\nonumber\\
	&=\arg\min_{\bm{S}_{\bm{W}_t}}\ell(\bm{y}_{\bm{W}_t},\bm{u}_{\bm{W}_t},\Theta)+r(\Theta)+\sum_{t\in \bm{W}_t}\log\pi_{s_t,s_{t-1}},\nonumber\\
	&\hspace{1.2cm}\textrm{s.t.}\quad s_{t-1}=\hat{s}_{t-1}.
\end{align}
The idea is thus to calculate an optimal solution to \eqref{eq1.25} for all the possible switching sequences $\bm{S}_{\bm{W}_t}$. Then, the optimal mode at time instant $t$ can be fixed as the first element of the $\bm{S}_{\bm{W}_t}$, i.e., $\hat{s}_t=\hat{\bm{S}}_{\bm{W}_t}(1)$. This process is then repeated up to $t=T-T_w+1$, and hence obtaining an estimate of switching sequence $\hat{\bm{S}}$.

Remarkably, with the proposed approach the posterior pdf is calculated $K^{T_w+1}$ times for each time instant $t=2,\cdots,T-T_w+1$. Thus, the computational complexity of the moving window approach is only $O((T-T_w)K^{T_w+2})$.
\begin{remark}
	The time window length $T_w\in\{1,\cdots,T-1\}$ offers a trade-off between the accuracy and computational complexity. When $T_w=T-1$, the moving window approach degenerates into considering all possible of switching sequences, and the computational complexity increases up to $O(K^{T+1})$.
\end{remark}
\begin{remark}
	The moving window method calculates the posterior probability at each time instant by only considering the possible modes at the current time and connecting them with the estimated mode from the previous time. This approach alleviates the combinatorial nature of enumerating all possible sequences, reducing the time complexity from exponential to linear order, and significantly improving the running speed of the algorithm, making it suitable for processing large-scale dataset. 
\end{remark}

\begin{algorithm}[t!]
	\caption{EM-based identification of switching nonlinear system}\label{alg:EM}
	\smallskip
	
	\textbf{Initialization:} Collect data $\bm y_{1:T}, \bm{u}_{1:T}$,  set $\Theta^0$, retrieve number of the modes $K$
	
	\smallskip
	
	\textbf{Iteration} $k\in\mathbb{Z}$\textbf{:}
	\smallskip
	\begin{enumerate}
		\item \textbf{E-step}: Compute $Q(\Theta,\Theta^k)$ using \eqref{eq1.34}, \eqref{eq1.23}, \eqref{eq1.24}, \eqref{eq1.25}
		\smallskip
		\item \textbf{M-step}: Maximize $Q(\Theta,\Theta^k)$ w.r.t. $\Theta$ and $\Pi$ by using \eqref{eq1.30} and the EKF-based procedure \eqref{eq1.10}-\eqref{eq1.18} 
	\end{enumerate}
\end{algorithm}
\subsection{Convergence analysis}\label{sec3.3}
The main steps of our method are summarized in Algorithm \ref{alg:EM}. We characterize next its convergence properties.
\begin{prop}\label{th3}
	Given a set of data $\bm{y}$, $\bm{u}$, and a number of modes $K$, let $\{\Theta^k\}_{k\in\mathbb{Z}}$ be the sequence generated by Algorithm \ref{alg:EM}. Then, the log likelihood function $\log\mathbb{P}_\Theta[\bm{y}]$ is non-decreasing, i.e., $\log\mathbb{P}_{\Theta^{k+1}}[\bm{y}]\geq\ln\mathbb{P}_{\Theta^{k}}[\bm{y}]$.
\end{prop}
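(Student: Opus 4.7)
The plan is to follow the classical monotonicity proof of the EM algorithm, adapted to this switching setting. The starting point is the identity
\begin{align*}
\log\mathbb{P}_{\Theta}[\bm{y}]=\log\mathbb{P}_{\Theta}[\bm{y},\bm{x},\bm{S}]-\log\mathbb{P}_{\Theta}[\bm{x},\bm{S}\vert\bm{y}],
\end{align*}
which holds pointwise in $(\bm{x},\bm{S})$. Since the left-hand side does not depend on the hidden variables, I would take the expectation $\mathbb{E}_{\Theta^k}[\cdot]$ (with respect to the posterior $\mathbb{P}_{\Theta^k}[\bm{x},\bm{S}\vert\bm{y}]$) on both sides. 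Using the notation already introduced in the excerpt, this yields the decomposition
\begin{align*}
\log\mathbb{P}_{\Theta}[\bm{y}]=Q(\Theta,\Theta^k)-H(\Theta,\Theta^k),
\end{align*}
where $H(\Theta,\Theta^k)\coloneqq\mathbb{E}_{\Theta^k}[\log\mathbb{P}_{\Theta}[\bm{x},\bm{S}\vert\bm{y}]]$.

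Next I would invoke Gibbs' inequality (a direct consequence of Jensen's inequality applied to $\log$) to show that, for every $\Theta$,
\begin{align*}
H(\Theta,\Theta^k)\leq H(\Theta^k,\Theta^k),
\end{align*}
with equality at $\Theta=\Theta^k$. In particular, taking $\Theta=\Theta^{k+1}$ gives $H(\Theta^{k+1},\Theta^k)-H(\Theta^k,\Theta^k)\leq 0$. On the other hand, Theorem~\ref{th1} guarantees that the EKF-based M-step in Algorithm~\ref{alg:EM} is equivalent to maximizing $Q(\Theta,\Theta^k)$ with respect to $\Theta$, so by construction $Q(\Theta^{k+1},\Theta^k)\geq Q(\Theta^k,\Theta^k)$. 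Subtracting the two copies of the decomposition at $\Theta^{k+1}$ and $\Theta^k$ and combining the two inequalities,
\begin{align*}
\log\mathbb{P}_{\Theta^{k+1}}[\bm{y}]-\log\mathbb{P}_{\Theta^{k}}[\bm{y}]=\underbrace{\bigl[Q(\Theta^{k+1},\Theta^k)-Q(\Theta^k,\Theta^k)\bigr]}_{\geq 0}-\underbrace{\bigl[H(\Theta^{k+1},\Theta^k)-H(\Theta^k,\Theta^k)\bigr]}_{\leq 0}\geq 0,
\end{align*}
which is the desired monotonicity.

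The main obstacle I anticipate is that the E-step is not carried out exactly, but through the moving window procedure in \S\ref{sec3.2}, and also that the M-step estimates $\Pi$ jointly with $\Theta$ (the decomposition above only involves $\Theta$). To close the argument rigorously, I would absorb $\Pi$ into the parameter vector of the joint density and verify that the moving window maximization in \eqref{eq1.24}--\eqref{eq1.25} still yields a posterior mode whose contribution to $Q(\Theta,\Theta^k)$ is no worse than the previous iterate; i.e., I would need to check that the windowed optimization produces a sequence $\hat{\bm S}$ for which the induced surrogate $Q(\Theta,\Theta^k)$ remains an admissible lower bound of $\log\mathbb{P}_\Theta[\bm y]$ at $\Theta=\Theta^k$. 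If this holds, the remainder of the argument is routine; if not, the best one can claim is a generalized EM guarantee that requires only a $Q$-ascent in the M-step, which is still sufficient to establish the non-decrease asserted by the proposition.
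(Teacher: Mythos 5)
Your argument is the classical EM monotonicity proof and coincides with the paper's own: the decomposition $\log\mathbb{P}_{\Theta}[\bm{y}]=Q(\Theta,\Theta^k)-H(\Theta,\Theta^k)$ together with Gibbs' inequality is exactly the paper's Kullback--Leibler term, and the $Q$-ascent supplied by Theorem~\ref{th1} closes the argument in both cases. Your closing caveat about the approximate moving-window E-step and the joint update of $\Pi$ flags a genuine issue that the paper's proof also leaves unaddressed (it tacitly assumes an exact E-step and exact maximization), so your generalized-EM fallback is, if anything, the more careful formulation.
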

\begin{proof}
	According to Theorem~\ref{th1}, performing the EKF steps \eqref{eq1.10}--\eqref{eq1.18} is equivalent to maximizing the objective function $Q(\Theta,\Theta^k)$, which yields at each iteration
	\begin{align}\label{eq1.26}
		Q(\Theta,\Theta^{k+1})\geq Q(\Theta,\Theta^k).
	\end{align}
	Then, we have:
	\begin{align*}
		\log\mathbb{P}_{\Theta}[\bm{y}]-&\log\mathbb{P}_{\Theta^{k}}[\bm{y}]\\
		&=Q(\Theta,\Theta^k)-Q(\Theta^k,\Theta^k)\\
		&+\int\log\mathbb{P}_{\Theta^k}[\bm{x},\bm{S}\vert\bm{y}]\log\mathbb{P}_{\Theta^k}[\bm{x},\bm{S}\vert\bm y]d(\bm{x})\\
		&-\int\log\mathbb{P}_{\Theta}[\bm{x},\bm{S}\vert\bm{y}]\log\mathbb{P}_{\Theta^k}[\bm{x},\bm{S}\vert\bm y]d(\bm{x})\\
		&=Q(\Theta,\Theta^k)-Q(\Theta^k,\Theta^k)\\
		&+\int\frac{\log\mathbb{P}_{\Theta^k}[\bm{x},\bm{S}\vert\bm y]}{\log\mathbb{P}_{\Theta}[\bm{x},\bm{S}\vert\bm{y}]}\log\mathbb{P}_{\Theta^k}[\bm{x},\bm{S}\vert\bm{y}]d(\bm{x})\\
		&=Q(\Theta,\Theta^k)-Q(\Theta^k,\Theta^k)\\
		&+DL(\log\mathbb{P}_{\Theta^k}[\bm{x},\bm{S}\vert\bm y]\Vert \log\mathbb{P}_{\Theta}[\bm{x},\bm{S}\vert\bm{y}]).
	\end{align*}
	where $DL(P\Vert Q)$ is the Kullback-Leibler divergence, which is guaranteed to be nonnegative \cite{Ji2022}. Therefore, applying \eqref{eq1.26} directly lead to a non-decreasing sequence for $\log\mathbb{P}_{\Theta}[\bm{y}]$, i.e., $\log\mathbb{P}_{\Theta^{k+1}}[\bm{y}]\geq\ln\mathbb{P}_{\Theta^{k}}[\bm{y}]$,
	completing the proof.
\end{proof}

Proposition \ref{th3} is a well-known result for EM method \cite{Mark2022} which says that the parameter estimates generated at each iteration of Algorithm \ref{alg:EM} are approximating the optimal value of the maximum likelihood estimate. By denoting with $\Theta^*$ as an optimal solution to the minimization problem of \eqref{eq1.7}, we can claim the following result:
\begin{thm}\label{th4}
	Denote the Hessian matrix of $Q(\Theta,\Theta^k)$ as $\mathcal{H}(\Theta,\Theta^k)$, assume its inverse being bounded, and $\Theta\to\mathcal{H}(\Theta,\Theta^k)$ Lipschitz continuous. Then, the sequence $\{\Theta^k\}_{k\in\mathbb{Z}}$ generated by Algorithm \ref{alg:EM} converges quadratically to some optimal solution $\Theta^*$.
\end{thm}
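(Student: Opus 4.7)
The plan is to recognize the outer EM-EKF iteration as a Newton-type update toward a stationary point of $Q$, and then invoke the classical quadratic-convergence argument for Newton's method. The two hypotheses, namely bounded inverse Hessian and Lipschitz continuous Hessian, are exactly the regularity conditions that make this argument go through; Proposition~\ref{th3} already provides the global-convergence ingredient needed to place $\Theta^k$ eventually in a neighborhood where local analysis applies.

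First, I would use Theorem~\ref{th1} to identify $\Theta^{k+1}$ as a maximizer of $\Theta \mapsto Q(\Theta,\Theta^k)$, which yields the first-order optimality condition
$$\nabla_\Theta Q(\Theta^{k+1},\Theta^k) = 0.$$
I would then pair this with the EM fixed-point identity $\nabla_\Theta Q(\Theta^*,\Theta^*)=0$ at the optimum. The latter is obtained by differentiating the standard decomposition $\log\mathbb{P}_\Theta[\bm{y}] = Q(\Theta,\Theta^*) - H(\Theta,\Theta^*)$, with $H(\Theta,\Theta'):=\mathbb{E}_{\Theta'}[\log\mathbb{P}_\Theta[\bm{x},\bm{S}\vert\bm{y}]]$, and invoking Gibbs' inequality to conclude that $\nabla_\Theta H(\Theta^*,\Theta^*)=0$; combined with stationarity of the marginal log-likelihood at an optimum, this forces $\nabla_\Theta Q(\Theta^*,\Theta^*)=0$.

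Second, fixing $\Theta^k$ and applying Taylor's theorem with integral remainder to $\Theta \mapsto \nabla_\Theta Q(\Theta,\Theta^k)$ between $\Theta^{k+1}$ and $\Theta^*$, I obtain
$$0 \;=\; \nabla_\Theta Q(\Theta^{k+1},\Theta^k) \;=\; \nabla_\Theta Q(\Theta^*,\Theta^k) + \mathcal{M}_k(\Theta^{k+1}-\Theta^*),$$
where $\mathcal{M}_k = \int_0^1 \mathcal{H}(\Theta^* + \tau(\Theta^{k+1}-\Theta^*),\Theta^k)\, d\tau$. Lipschitz continuity of $\mathcal{H}(\cdot,\Theta^k)$ gives $\|\mathcal{M}_k - \mathcal{H}(\Theta^*,\Theta^k)\| = O(\|\Theta^{k+1}-\Theta^*\|)$, and the bounded-inverse hypothesis guarantees that $\mathcal{M}_k$ is invertible with a uniform bound once the iterates are close to $\Theta^*$. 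Inverting yields
$$\Theta^{k+1}-\Theta^* \;=\; -\mathcal{M}_k^{-1}\nabla_\Theta Q(\Theta^*,\Theta^k).$$
Comparing $\nabla_\Theta Q(\Theta^*,\Theta^k)$ to $\nabla_\Theta Q(\Theta^*,\Theta^*)=0$ through the smooth dependence of $Q$ on its second argument (an expectation under the posterior parametrized by $\Theta^k$), and exploiting that the cross-derivative $\nabla_{\Theta'}\nabla_\Theta Q(\Theta^*,\Theta^*)$ vanishes at the fixed point by the same Gibbs argument applied to $H$, I would bound $\|\nabla_\Theta Q(\Theta^*,\Theta^k)\| = O(\|\Theta^k-\Theta^*\|^2)$. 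Collecting estimates gives
$$\|\Theta^{k+1}-\Theta^*\| \;\leq\; C\,\|\Theta^k-\Theta^*\|^2$$
for some constant $C>0$, which is the claimed quadratic rate.

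The main obstacle I anticipate lies in controlling $\nabla_\Theta Q(\Theta^*,\Theta^k)$ to second order in $\|\Theta^k-\Theta^*\|$: the theorem's hypothesis only asserts Lipschitz regularity of $\mathcal{H}$ in its \emph{first} argument, while the quadratic rate genuinely depends on the behavior of $Q$ in the \emph{second} argument near the fixed point. Making the cross term quadratic rather than merely linear (in which case one would only recover the standard linear EM rate, governed by the fraction of missing information) is the nontrivial step, and it is where the Gibbs/information-matrix identity at the EM fixed point must be invoked carefully. A secondary technicality is ensuring that the sequence does not escape the local basin of attraction before the local estimates kick in, which I would handle by combining Proposition~\ref{th3} with coercivity of $-\log\mathbb{P}_\Theta[\bm{y}]$ so that $\{\Theta^k\}$ lies in a compact sublevel set and eventually enters a neighborhood of $\Theta^*$ where the Lipschitz and inverse-bound constants apply uniformly.
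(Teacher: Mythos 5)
Your route is genuinely different from the paper's, and it contains a gap at exactly the step you yourself flag as the main obstacle. The paper does not analyze the outer map $\Theta^k\mapsto\Theta^{k+1}$ as an EM fixed-point iteration at all: it writes the EKF parameter update \eqref{eq1.33} as a Newton step $\vartheta(t+1)=\vartheta(t)-\mathcal{H}(\vartheta(t),\Theta^k)^{-1}\,\partial Q(\vartheta(t),\Theta^k)/\partial\vartheta(t)$ on the \emph{inner} recursion over the data index $t$, expresses the gradient as $\int_0^1\mathcal{H}(\vartheta(t)+a(\Theta^*-\vartheta(t)),\Theta^k)(\Theta^*-\vartheta(t))\,da$ via the Newton--Leibniz formula, and then runs the classical quadratic-convergence argument for Newton's method using only the two stated hypotheses (bounded inverse, Lipschitz continuity of the Hessian in its first argument). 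Everything there happens with the second argument $\Theta^k$ frozen, so the cross-dependence you worry about never enters their computation.

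The gap in your version is that the claim $\nabla_{\Theta'}\nabla_\Theta Q(\Theta^*,\Theta^*)=0$ is false in general. Gibbs' inequality gives $\nabla_\Theta H(\Theta',\Theta')=0$ for every $\Theta'$; differentiating this identity with respect to $\Theta'$ yields $\nabla_{\Theta'}\nabla_\Theta H(\Theta^*,\Theta^*)=-\nabla_\Theta^2 H(\Theta^*,\Theta^*)$, which is the missing-information matrix and is nonzero whenever the latent variables $(\bm{x},\bm{S})$ actually carry information about $\Theta$. Since $Q(\Theta,\Theta')=\log\mathbb{P}_\Theta[\bm y]+H(\Theta,\Theta')$, the same nonzero cross-derivative appears in $Q$, so $\|\nabla_\Theta Q(\Theta^*,\Theta^k)\|$ is only $O(\|\Theta^k-\Theta^*\|)$ and your final estimate degrades to the standard linear EM rate governed by the fraction of missing information. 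This is not a technicality to be patched by invoking the identity "more carefully''; it is precisely why EM, viewed as an outer iteration, is generically linearly rather than quadratically convergent. For what it is worth, the paper's own proof avoids this issue only by establishing quadratic convergence of the inner Newton/EKF recursion toward the maximizer of $Q(\cdot,\Theta^k)$ and then identifying that with convergence of the outer sequence $\{\Theta^k\}$ to $\Theta^*$ --- a passage that your analysis correctly suggests cannot be taken for granted.
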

\begin{proof}
	According to the EKF-based training process for the parameters update, and \eqref{eq1.31}--\eqref{eq1.32}, we have:
	\begin{align}\label{eq1.33}
		&\vartheta(t+1)\nonumber\\
		&=\vartheta(t)+P^-(t)H^\top(t)\left[H(t)P^-(t)H^\top(t)+\Sigma_2(t)\right]^{-1}e(t)\nonumber\\
		&=\vartheta(t)\!+\!(P^-(t)^{-1}\!+\!H^\top(t)\Sigma_2^\top(t)H(t))^{-1}H^\top(t)\Sigma_2^{-1}(t)e(t)\nonumber\\
		&=\vartheta(t)+P(t)H^\top(t)\Sigma_2^{-1}(t)(y(t)-\mathcal{N}_{y,s_t}(\vartheta(t))),
	\end{align}
	where 
	\begin{align*}
		\frac{\partial Q(\vartheta(t),\Theta^k)}{\partial \vartheta(t)}&=-H^\top(t)\Sigma_2^{-1}(t)e(t),\\
		\mathcal{H}(\vartheta(t),\Theta^k)&=\frac{\partial^2 Q(\vartheta(t),\Theta^k)}{\partial \vartheta(t)^2}\\
		&=P^-(t)^{-1}+H^\top(t)\Sigma_2^\top(t)H(t).
	\end{align*}
	Let us define $\mathcal{F}(a)=H^\top(t)\Sigma_2^{-1}(t)(y(t)-\mathcal{N}_{y,s_t}(\vartheta(t)+a(\Theta^*-\vartheta(t))))$, and its derivative as $\mathcal{F}^{'}(a)=\mathcal{H}(\vartheta(t)+a(\Theta^*-\vartheta(t)),\Theta^k)(\Theta^*-\vartheta(t))$. According to Newton-Leibniz formula, we readily obtain:
	\begin{align*}
		&-\frac{\partial Q(\vartheta(t),\Theta^k)}{\partial \vartheta(t)}\\
		&=H^\top(t)\Sigma_2^{-1}(t)(y(t)-\mathcal{N}_{y,s_t}(\Theta^*))\\
		&\hspace{3cm}-H^\top(t)\Sigma_2^{-1}(t)(y(t)-\mathcal{N}_{y,s_t}(\vartheta(t)))\\
		&=\mathcal{F}(1)-\mathcal{F}(0)\\
		&=\int_{0}^{1}\mathcal{F}^{'}(a)d(a)\\
		&=\int_{0}^{1}\mathcal{H}(\vartheta(t)+a(\Theta^*-\vartheta(t)),\Theta^k)(\Theta^*-\vartheta(t))d(a).
	\end{align*}
	By subtracting $\Theta^*$ from both sides of \eqref{eq1.33} simultaneously, it can be inferred that: 
	\begin{align*}
		&\vartheta(t+1)-\Theta^*\\
		&=\vartheta(t)-\Theta^*+\mathcal{H}(\vartheta(t),\Theta^k)^{-1}\\
		&\hspace{1cm}\int_{0}^{1}\mathcal{H}(\vartheta(t)+a(\Theta^*-\vartheta(t)),\Theta^k)(\Theta^*-\vartheta(t))d(a)\\
		&=\mathcal{H}(\vartheta(t),\Theta^k)^{-1}\int_{0}^{1}(\mathcal{H}(\vartheta(t)+a(\Theta^*-\vartheta(t)),\Theta^k)\\
		&\hspace{3.3cm}-\mathcal{H}(\vartheta(t),\Theta^k))(\Theta^*-\vartheta(t))d(a).
	\end{align*}
	Taking the norm on both sides of the relation above yields:
	\begin{align*}
		&\Vert\vartheta(t+1)-\Theta^*\Vert\\
		&\leq\Vert\mathcal{H}(\vartheta(t),\Theta^k)^{-1}\Vert\cdot\Vert\Theta^*-\vartheta(t)\Vert\\
		&\cdot\Vert\int_{0}^{1}(\mathcal{H}(\vartheta(t)+a(\Theta^*-\vartheta(t)),\Theta^k)-\mathcal{H}(\vartheta(t),\Theta^k))d(a)\Vert.
	\end{align*}
	According to the assumptions postulated on the Hessian matrix $\mathcal{H}$, we have: 
	\begin{align*}
		&\mathcal{H}(\vartheta(t),\Theta^k)\leq \frac{1}{\mathcal{A}},\\
		&\Vert\mathcal{H}(\vartheta_1,\Theta^k)-\mathcal{H}(\vartheta_2,\Theta^k)\Vert\leq \mathcal{B}\Vert\vartheta_1-\vartheta_2\Vert,
	\end{align*}
	for some nonnegative scalars $\mathcal A$ and $\mathcal B$. We can therefore derive the following bound on the distance between the current parameter estimate $\vartheta(t+1)$ and some $\Theta^*$:
	\begin{align*}
		\Vert\vartheta(t+1)-\Theta^*\Vert\leq\frac{2\mathcal{B}}{\mathcal{A}}\Vert\vartheta(t)-\Theta^*\Vert^2.
	\end{align*}
	For a finite training dataset of length $T$ we have hence proved that the sequence $\{\Theta^k\}_{k\in\mathbb{Z}}$ generated by Algorithm \ref{alg:EM} enjoys quadratic convergence to some $\Theta^*$.
\end{proof}

By assuming that the Hessian matrix is bounded and Lipschitz continuous, Theorem~\ref{th4} says that Algorithm~1 produces a sequence of parameter estimates that quickly converges to an optimal set of parameters $\Theta^*$. Generally speaking, if a function is strongly convex, then its Hessian matrix is usually bounded and Lipschitz continuous. These properties are 
key for establishing converge to the optimal solution at a certain rate. 

\section{Numerical experiments}\label{sec4}
The proposed EM-based technique is now tested on three numerical examples. Algorithm \ref{alg:EM} is then run $N$ times
with $K$ fixed modes and the window length $T_w$. To evaluate the effectiveness of our methodology, we will make use of the mean square error (\textrm{MSE}):
\begin{align*}
	\textrm{MSE}=\frac{1}{T}\sum_{t=1}^{T}(y(t)-\hat{y}(t))^2,
\end{align*}
along with the best fit rate (\textrm{BFR}):
\begin{align*}
	\textrm{BFR}=100\left(1-\sqrt{\frac{\sum_{t=1}^{T}\Vert y(t)-\hat{y}(t)\Vert^2}{\sum_{t=1}^{T}\Vert y(t)-\overline{y}(t)\Vert^2}}\right)\%.
\end{align*}

\subsection{An academic example: a switching system composed of a linear and a nonlinear part}
Consider the 1000 input/output pairs generated by the system in \cite{Anna2018}, which consists of two submodels ($K=2$): one linear and another nonlinear. The value of inputs follows a uniform distribution and are hence randomly generated in $[0,1]$. The noise term follows a Gaussian distribution with zero mean and finite variance. The proposed NN structure with four RNN submodels is used to model the above switching system. Each of the RNN submodel has $l_f=l_g=2$ layers and four neurons. The activation function of these layers is $\arctan(\cdot)$. We then use the Algorithm \ref{alg:EM} to train the above switching system. 

In Fig.~\ref{fig:1} we show the true and estimated outputs, the resulting MSE, and the true switching sequence with the estimated one over the time window $[200,300]$. The three subgraphs presented in Fig.~\ref{fig:1} collectively illustrate the superior performance of the proposed method. Then, our technique is compared with the kernel-based approach proposed in \cite{Anna2018}. Table \ref{tab:3} reports both indices for the two methods under the different noise level, highlighting that our RNN-based technique exhibits better performance in most of the cases, especially in situations with large noise level.
\begin{figure}[h]
	\includegraphics[height=0.8\linewidth]{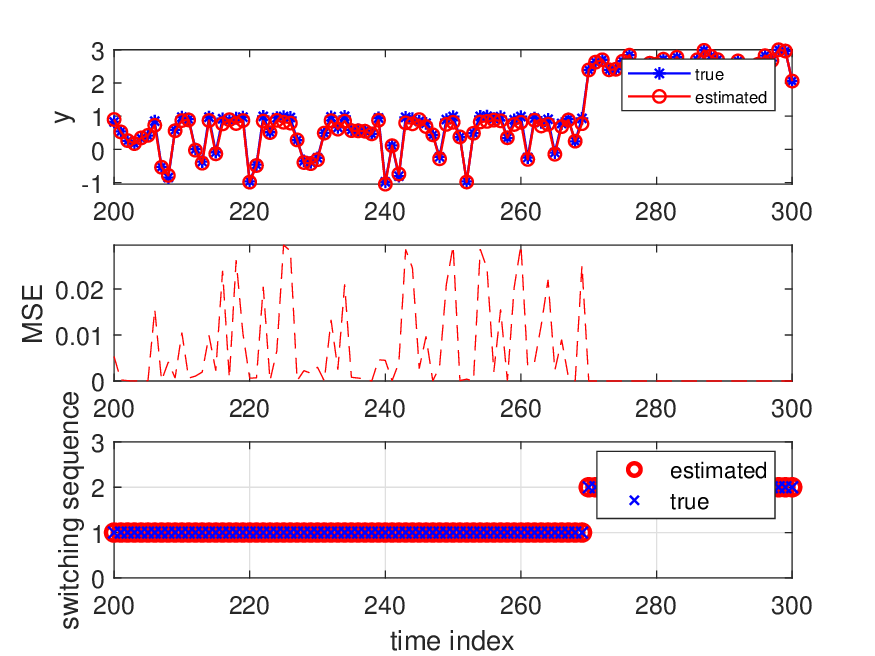} 
	\caption{Top: The true output (solid blue line with asterisks) and the estimated one by our method (solid red line with circles) over the time window $t\in[200,300]$. Middle: The resulting MSE. Bottom: The true switching sequence (blue crosses) and the estimated one (red circles).}
	\label{fig:1}
\end{figure}
\begin{table}[!t]
	\centering
	\setlength{\tabcolsep}{15pt}
	\renewcommand{\arraystretch}{1.4}
	\caption{The \textrm{MSE} and \textrm{BFR} for the proposed method and the kernel-based approach in \cite{Anna2018}, under different noise level.}
	\label{tab:3}
	\begin{tabular}{c|c|c|c|}
		&noise level&RNNs& \cite{Anna2018}\\\hline
		\multirow{4}{*}{\text{MSE}}& 0.001 & 0.0054&\textbf{0.0043} \\ \cline{2-4}
		& 0.01 &\textbf{ 0.0056} & 0.0684  \\ \cline{2-4}
		& 0.1 & \textbf{0.0155} & 0.0855 \\ \cline{2-4}
		& 0.2 &\textbf{0.0451} & 0.1061   \\ \hline
		\multirow{4}{*}{\text{BFR}}& 0.001 & $93.50\%$ & $\textbf{96.49\%}$  \\ \cline{2-4}
		& 0.01 & $\textbf{93.39\%}$ & $71.86\%$   \\ \cline{2-4}
		& 0.1 & $\textbf{89.01\%}$ & $69.22\%$   \\ \cline{2-4}
		& 0.2 & $\textbf{81.23}\%$&$67.22\%$  \\ \hline
	\end{tabular}
\end{table}
\subsection{An academic example: a switching system composed of two nonlinear parts}
Consider the following switching nonlinear system:
\begin{equation}\label{eq1.27}
	\begin{aligned}
		x(t+1)&=A_{s_t}\tanh(x(t))+B_{s_t}u(t)+\zeta(t),\\
		y(t)&=C_{s_t}\sin(x(t))+D_{s_t}u(t)-2+\xi(t),
	\end{aligned}
\end{equation}
with $n_u=1$, $n_x=3$, $n_y=1$. We take $x(0)=0$ and the values of the inputs $u(t)$ follows a uniform distribution and are hence randomly generated in $[0,1]$. The finite variance characterizing the noise terms $\zeta(t)$ and $\xi(t)$ is $Q_x=Q_y=10^{-3}$. We consider $K=2$ modes with the following system matrices/vectors:
\begin{align*}
	A_1&=\begin{bmatrix}
		0.8 & 0.2& -0.1\\
		0& 0.9& 0.1\\
		0.1& -0.1& 0.7
	\end{bmatrix}, &A_2&=\begin{bmatrix}
		0.5& -0.2& -0.1\\
		0& 0.9& 0.1\\
		-0.1& -0.3& 0.8
	\end{bmatrix},\\
	B_1&=[-1 \; 0.5 \; 1]^\top, &B_2&=[-0.5 \; 0.1 \; 0.5]^\top,\\
	C_1&=[-1 \; 1.5 \; 0.5], &C_2&=[-0.1 \; -0.5 \; 0.8],\\
	D_1&=0.1,&D_2&=-0.1,\\
	\Pi&=\begin{bmatrix}
		0.98& 0.02\\
		0.02& 0.98
	\end{bmatrix}.
\end{align*}
The data are generated by \eqref{eq1.27} for $T=1000$ samples. Then, we use the RNNs in \eqref{eq1.5} to model the dynamics in \eqref{eq1.27}, where each RNN submodel $\mathcal{N}_{x,1}$, $\mathcal{N}_{x,2}$, $\mathcal{N}_{y,1}$, and $\mathcal{N}_{y,2}$ has $l_f=l_g=2$ layers with six neurons. The activation function for the first layer $f_1$, $g_1$ is $\arctan(\cdot)$, and the output layer $g_2$ is instead a linear mapping. Thus, the parameters $\theta_{f,1}$, $\theta_{f,2}$, $\theta_{g,1}$, $\theta_{g,2}$, and the switching sequence $\bm{S}$ fully describe the system dynamics in \eqref{eq1.5}.

We run Algorithm~\ref{alg:EM} for $N=10$ times and the EKF process for $10$ epochs, for each RNN submodel.
In this example, our technique is compared with the Bayesian ensemble learning algorithm proposed in \cite{mlp}. Fig.~\ref{fig:3} shows the true and estimated outputs obtained by the proposed method, demonstrating that our EM-based algorithm can achieve a great performance in the parameter estimation, with a remarkable $100\%$ match rate in estimating the switching sequence.

To verify the performance of the proposed method against different noise levels, we use the trained RNNs to predict $100$ trajectories with the different initial states and noise conditions. In Figs. \ref{fig:6} and \ref{fig:7} we report both mean and variance of the two indices considered, comparing the results obtained with our RNN-based method, the kernel-based method \cite{Anna2018} and Bayesian ensemble learning \cite{mlp}. It can indeed be observed that our technique features better accuracy and lower error, since our method has excellent performance in both MSE and BFR evaluation metrics, regardless of the level of noise. 

\begin{figure}[t]
	\hspace{-0.5cm}
	\includegraphics[height=0.85\linewidth]{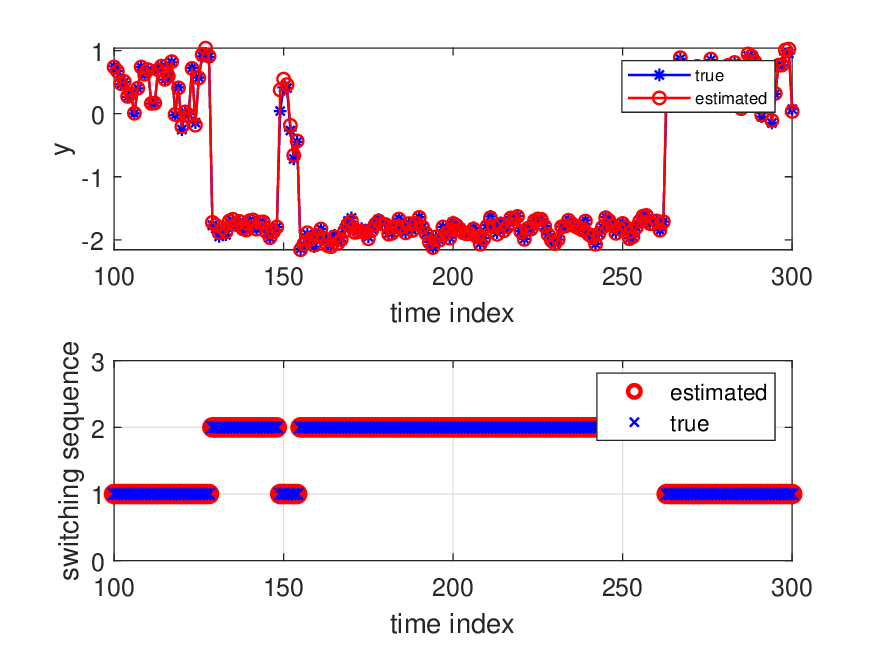} 
	\caption{Top: The true (solid blue line with asterisks) and the estimated output by using our RNN-based method (solid red line with circles) over the time window $t\in[100,300]$. Bottom: The true switching sequence (blue crosses) and the one estimated by our method (red circles).}
	\label{fig:3}
\end{figure}
\begin{figure}[h]
	\centering
	\includegraphics[height=0.8\linewidth]{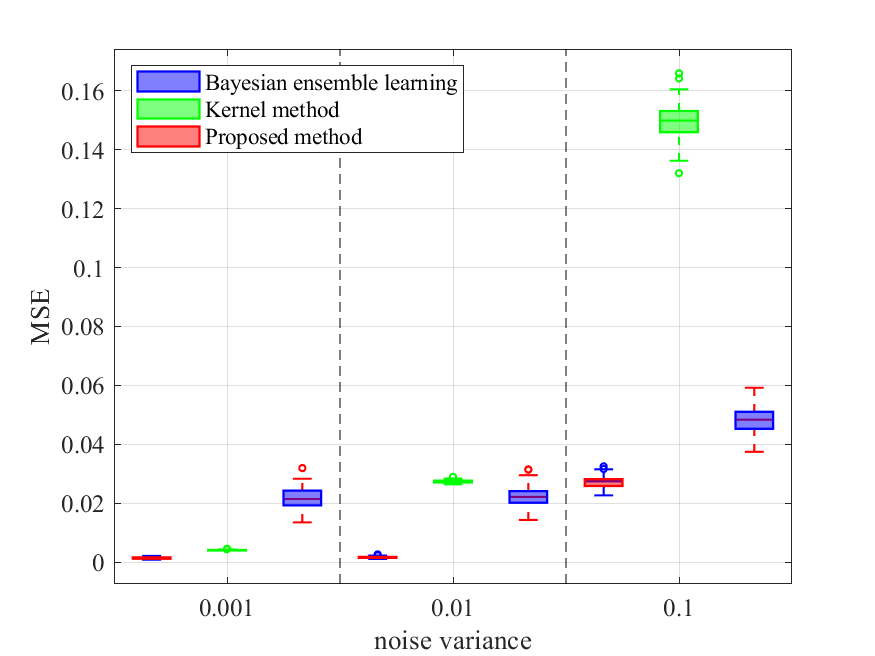} 
	\caption{The MSE obtained by proposed method (red), the kernel-based method (yellow) \cite{Anna2018} and the Bayesian ensemble learning (blue) \cite{mlp} under different noise conditions.}
	\label{fig:6}
\end{figure}
\begin{figure}[h]
	\centering
	\includegraphics[height=0.8\linewidth]{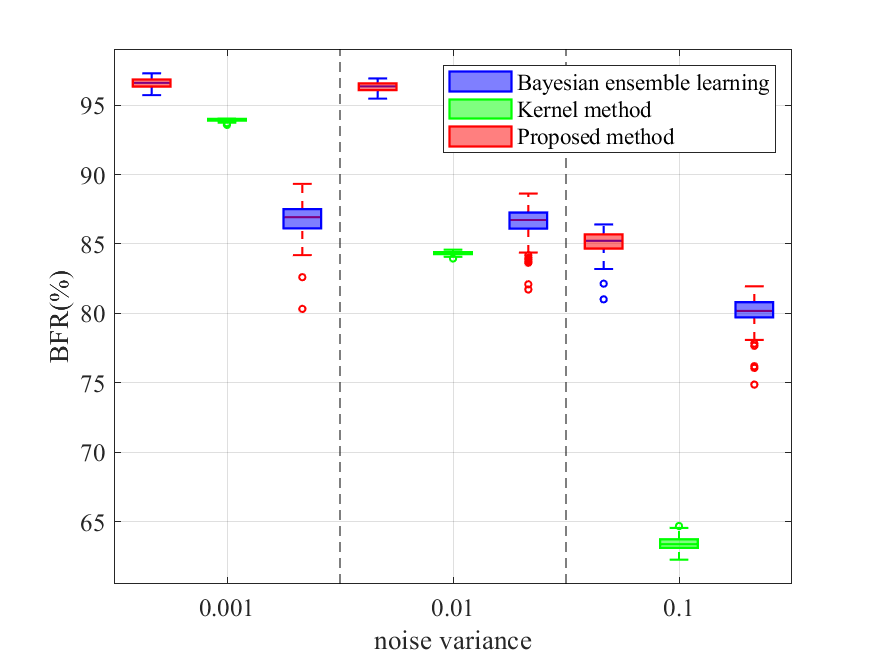} 
	\caption{The BFR obtained by proposed method (red), the kernel-based method (yellow) \cite{Anna2018} and the Bayesian ensemble learning (blue) \cite{mlp} under different noise conditions.}
	\label{fig:7}
\end{figure}

\subsection{Battery state of charge estimation}
We now test our approach to estimate the state of charge (SOC) in a battery management system. Specifically, we use a battery platform to generate the dataset of $T=1047$ samples, which consists of a computer, a battery tester (NEWARE CT-4008-5V12A-TB) and a group of lithium batteries (NCR18650PF). The dataset is composed of two parts: the control input consisting of both current and voltage in the circuit, and the output including the SOC at each time instant. In addition, the data contains two operating modes of the circuit overall, i.e., charging and discharging, illustrated in Fig.~\ref{fig:4}. Therefore, we use $K=2$ submodels to characterize different circuit operations. Then, RNNs are used to model the relation between inputs (current and voltage) and outputs (SOC) in each of these operating modes. Specifically, we choose $l_f=l_g=2$ layers with six neurons each. The activation functions adopted in the first and the second layer are standard affine and sigmoid functions, respectively. The EKF-based training procedure is run for $20$ epochs for each RNN submodel, and Algorithm~\ref{alg:EM} is run $N=10$ times. The SOC estimates and MSE errors are shown in Fig. \ref{fig:5} which reports the great performance of the proposed method. Furthermore, the match rate of the switching sequence achieves $100\%$.
\begin{figure}
	\centering
	\includegraphics[height=0.8\linewidth]{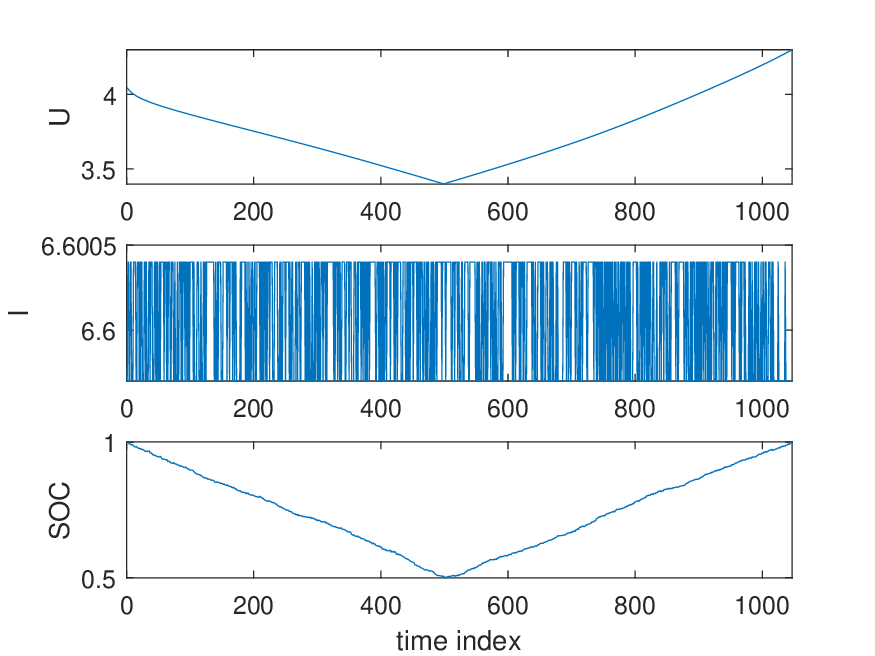} 
	\caption{The training data collected in the battery platform where $U$ and $I$ respectively represent the voltage and current in the circuit, and SOC represents the battery state of charge.}
	\label{fig:4}
\end{figure}
\begin{figure}
	\centering
	\includegraphics[height=0.8\linewidth]{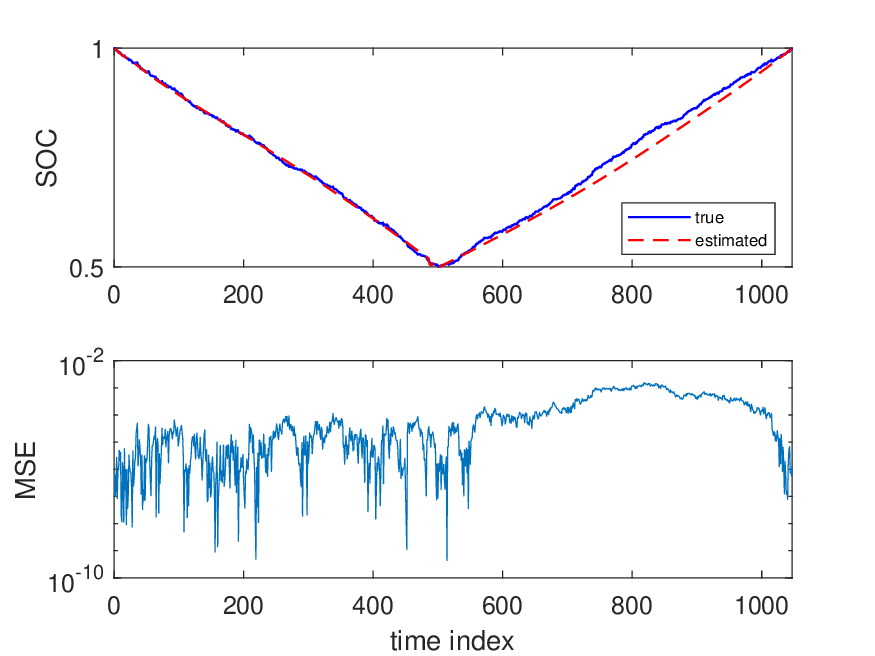} 
	\caption{Estimates and MSE errors of the SOC by using the proposed method.}
	\label{fig:5}
\end{figure}

We finally compare our technique with a standard SOC learning method \cite{Jiao}, which uses the gate recurrent unit (GRU) based momentum algorithm to estimate the SOC. 
Table~\ref{tab:1} reports both \textrm{MSE} and \textrm{BFR} indices for the two methods, highlighting how our approach results in better performance.

\begin{table}[!hbtp]
	\centering
	\setlength{\tabcolsep}{15pt}
	\renewcommand{\arraystretch}{1.4}
	\caption{Battery platform: The \textrm{MSE} and \textrm{BFR} for the two algorithms}
	\label{tab:1}
	\begin{tabular}{lll}\hline
		Method & \textrm{BFR}&  \textrm{MSE}  \\\hline
		GRU \cite{Jiao}& 88.25\%&0.0020\\\hline 
		Proposed method& \textbf{92.62\%} &\textbf{0.0004}\\\hline
	\end{tabular}
\end{table}

\section{Conclusion}\label{sec5}

We have presented a neural network-based scheme for the identification of switching nonlinear systems with unknown inner structures. In the EM framework, we have then devised an iterative procedure alternating an E-step and M-step. While in the former we have used a moving window approach to obtain the posterior estimate of the switching sequence, with significant reduction of the overall computational complexity, in the latter an EKF-based training procedure, suitable for switching systems, has been proposed to retrieve the parameter estimates for each subsystem, modeled through an RNN. Numerical experiments have been carried out to demonstrate that the proposed methodology exhibits excellent performance in terms of parameter estimation, model fitting, and switching sequence identification.

Besides RNNs, more expressive neural network architectures, such as long short-term memory networks (LSTMs) or GRUs, could be considered to further enhance the model's generalization ability. In addition, one may also investigate the case in which the number of subsystems is not known, thereby facing further technical challenges. These aspects are left to future work. 

%
%
%
%
%

\bibliographystyle{plain}
\bibliography{biblio}

\appendix
\section{Technical proofs}\label{sec:proofs}
\textit{Proof of Proposition~\ref{prop1}:} In view of Standing Assumption~\ref{assum:2}, we have:
\begin{align*}
	&\mathbb{P}[\bm{y},\bm{x},\bm{S},\Theta \vert\bm{u}]\\
	&=\mathbb{P}[\bm{y},\bm{x}\vert\bm{u},\bm{S},\Theta]\mathbb{P}[\bm{S},\Theta\vert\bm{u}]\\
	&=\mathbb{P}[\bm{y}\vert\bm{x},\bm{u},\bm{S},\bm{\theta}_g^k]\mathbb{P}[\bm{x}\vert,\bm{u},\bm{S},\bm{\theta}_f^k]\\
	&\hspace{4.5cm}\mathbb{P}[\bm{S}\vert\bm{u}]\mathbb{P}[\bm{\theta}_{f}^k\vert\bm{u}]\mathbb{P}[\bm{\theta}_{g}^k\vert\bm{u}]\\
	&=\mathbb{P}[\bm{y}\vert\bm{x},\bm{u},\bm{S},\bm{\theta}_g^k]\mathbb{P}[\bm{x}\vert\bm{u},\bm{S},\bm{\theta}_f^k]\mathbb{P}[\bm{S}]\mathbb{P}[\bm{\theta}_{f}^k]\mathbb{P}[\bm{\theta}_{g}^k]\\
	&=\prod_{t=1}^{T}\mathbb{P}[y(t)\vert x(t),u(t),\theta_{g,s_t}]\\
	&\hspace{1.5cm}\prod_{t=1}^{T}\mathbb{P}[x(t)\vert x(t-1), u(t-1),\theta_{f,s_{t}}]\prod_{i=1}^{K}\mathbb{P}[\theta_{f,i}]\\
	&\hspace{1.5cm}\prod_{i=1}^{K}\mathbb{P}[\theta_{g,i}]\mathbb{P}[s_0]\prod_{t=1}^{T}\mathbb{P}[s_t\vert s_{t-1}].
\end{align*}
By taking the logarithm of the probability density function as mentioned in \S \ref{sec3} we then obtain:
\begin{align*}
	&\log\mathbb{P}[\bm{y},\bm{x},\bm{S},\Theta \vert\bm{u}]\\
	&=\sum_{t=1}^{T}\log\mathbb{P}[y(t)\vert x(t),u(t),\theta_{g,s_t}]\\
	&+\sum_{t=1}^{T}\log\mathbb{P}[x(t)\vert x(t-1), u(t-1),\theta_{f,s_{t}}]+\sum_{i=1}^{K}\log\mathbb{P}[\theta_{f,i}]\\
	&+\sum_{i=1}^{K}\log\mathbb{P}[\theta_{g,i}]+\log\mathbb{P}[s_0]+\sum_{t=1}^{T}\log\mathbb{P}[s_t\vert s_{t-1}].
\end{align*}
Then, minimizing $J(\bm{y},\bm{u},\Theta,\bm{S})$ in \eqref{eq1.7} w.r.t. $\Theta$, whose components are defined in \eqref{subeq1.6}, it is equivalent to maximizing the logarithm of $\mathbb{P}[\bm{y},\bm{x},\bm{S},\Theta \vert\bm{u}]$, which is hence also equivalent to maximizing $\mathbb{P}[\bm{y},\bm{x},\bm{S},\Theta]$.
\qedsymbol

\emph{Proof of Theorem~\ref{th1}:}
From Proposition \ref{prop1}, we know that minimizing the loss function $J(\bm{y},\bm{u},\Theta,\bm{S})$ defined in \eqref{eq1.7} with ingredients as in \eqref{subeq1.6} is equivalent to maximizing $Q(\Theta,\Theta^k)$ w.r.t. $\Theta$. Therefore, we only need to prove that applying the EKF process to each submodel minimizes the loss function $J(\bm{y},\bm{u},\Theta,\bm{S})$. For conciseness, only one iteration process of EKF is considered, while the others can be derived in the same way.

At the $n$-th step, the loss function can be written as 
\begin{align}\label{eq1.20}
	&J(\bm{y}_{1:n},\bm{u}_{1:n},\Theta,\bm{S}_{1:n})\nonumber\\
	&=\sum_{t=1}^{n}\ell(y(t),u(t),\theta_{f,s_t},\theta_{g,s_t})+\sum_{i=1}^{K}(r(\theta_{f,i})+r(\theta_{g,i}))\nonumber\\
	&\hspace{7cm}+\mathcal{L}(\bm{S})\nonumber\\
	&=J(\bm{y}_{1:n-1},\bm{u}_{1:n-1},\Theta^k,\bm{S}_{1:n-1})\nonumber\\
	&\hspace{1cm}+\ell(y(n),u(n),\theta_{f,s_n},\theta_{g,s_n})+\mathcal{L}(s_n).
\end{align}
Let us introduce $z_n\coloneqq\{\bm{y}_{1:n},\bm{u}_{1:n},\Theta$, $\bm{S}_{1:n}\}$. The loss function in \eqref{eq1.20} can hence be rewritten as
\begin{align*}
	J(z_n)=J(z_{n-1})+\ell(y(n),u(n),\theta_{f,s_n},\theta_{g,s_n})+\mathcal{L}(s_n).
\end{align*} 

The EKF is a recursive process consisting in two main parts. The first part consists in predicting the prior estimate of the parameter in \eqref{eq1.12}. Similar to \S \ref{sec3.1}, we indicate with $\hat{z}_n^-$ the prior estimate of $z_n$ so that the prior loss can be rewritten as: 
\begin{align}\label{eq1.21}
	J^-(z_n)&=J(z_{n-1})+\ell(x(n+1),x(n),u(n),\theta_{f,s_n})\nonumber\\
	&\hspace{4cm}+\frac{1}{2}\mathcal{L}(s_n)
\end{align}
In the second part, one updates the prior estimate by using the measurement data \eqref{eq1.13}. To this end, let $\hat{z}_n$ be the posterior estimate of $z_n$. Then, the posterior loss reads as:
\begin{align}\label{eq1.22}
	J(z_n)=J^-(z_{n})+\ell(y(n),x(n),u(n),\theta_{g,s_n})+\frac{1}{2}\mathcal{L}(s_n).
\end{align}
Let us then consider the prediction step first.
To apply the EKF, one has to linearize the nonlinear system \eqref{eq1.9} as in \eqref{eq1.10} and \eqref{eq1.11}. Thus, the RNN-based system is locally equivalent to the following linear system:
\begin{align*}
	x(t+1)&=F(t)\nu(t)+\tilde{u}_x(t)+\zeta(t),\\
	y(t)&=H(t)\nu(t)+\tilde{u}_y(t)+\xi(t),\\
	\nu(t)&=\begin{bmatrix}
		x(t)\\
		\vartheta(t)
	\end{bmatrix},
\end{align*}
where $\tilde{u}_x(t)=\mathcal{N}_{x,s_t}(x(t),u(t),\theta_{f,s_t})-F(t)\nu(t)$, and $\tilde{u}_y(t)=\mathcal{N}_{y,s_t}(x(t),u(t),\theta_{g,s_t})-H(t)\nu(t)$. 
A standard method for solving linear equations is adopting the weighted least-square, which relies on the loss function $	\ell(x(n+1),x(n),u(n),\theta_{f,s_{n}})=\frac{1}{2}\Vert x(n+1)-\mathcal{N}_{x,s_n}(x(n),u(n),\theta_{f,s_n})\Vert_{Q_x^{-1}}$, where $Q_x$ is a positive definite weight matrix (notice the abuse of notation with the $\ell(\cdot)$ determining \eqref{eq1.7}). To alleviate notation, we omit the dependency for $\mathcal{N}_{x,s_n}$ on its arguments, and define $\bm{\nu}_{1:n}\coloneqq[\nu(1) \ldots \nu(n)]^\top$. Then, deriving the expression in \eqref{eq1.21} w.r.t. $\bm{\nu}_{1:n+1}$ leaves us with
	\begin{align*}
		\frac{\partial J^-(z_n)}{\partial\bm{\nu}_{1:n+1}}=\begin{bmatrix}
			\frac{\partial J(z_{n-1})}{\partial\bm{\nu}_{1:n}}
			-F(n)^\top Q_x^{-1}(x(n+1)-\mathcal{N}_{x,s_n})\\
			Q_x^{-1}(x(n+1)-\mathcal{N}_{x,s_n})\end{bmatrix}.
	\end{align*}
Let the gradient be zero (i.e., $\nu(n)=\hat{\nu}^-(n)$), which yields \eqref{eq1.12}. We derive next the covariance of the estimate $\hat{\nu}(n+1)$. Specifically, the Hessian matrix reads as:
\begin{align*}
	&\frac{\partial^2 J^-(z_n)}{\partial\bm{\nu}_{1:n+1}^2}\\
	&=\begin{bmatrix}
		\frac{\partial^2 J(z_{n-1})}{\partial\bm{\nu}_{1:n}^2}+F(n)Q_x^{-1}F^\top(n) &-F(n)^\top Q_x^{-1}\\
		-Q_x^{-1}F(n)&Q_x^{-1} \end{bmatrix},
\end{align*}
which in view of its positive definiteness, the one-step iteration of the Newton's method can be obtained through the gradient and the Hessian matrix:
\begin{align*}
	\nu(n+1)=\nu(n)-\left(\frac{\partial^2 J^-(z_n)}{\partial\bm{\nu}_{1:n+1}^2}\right)^{-1}\frac{\partial J^-(z_n)}{\partial\bm{\nu}_{1:n+1}},
\end{align*}
where the lower-right block of the inverse of the Hessian matrix is the covariance matrix of the prior estimation error. Then, we have $P^-(n)=F(n)P(n-1)F(n)^\top+Q_x$, which yields \eqref{eq1.15} with $Q_x(t)=\mathrm{diag}(\Sigma_1(t),\Sigma_\vartheta(t))$.

Successively, we need to prove that the update part of the EKF process also minimizes the loss function in \eqref{eq1.22}. Similar to the prediction part, the gradient and the Hessian matrix of $J(\hat{z}_n)$ reads as:
\begin{align*}
	\frac{\partial J(z_n)}{\partial\bm{\nu}_{1:n}}&=
	\frac{\partial J^-(z_n)}{\partial\bm{\nu}_{1:n}}-H(n)^\top Q_y^{-1}(y(n)-\mathcal{N}_{y,s_n}), \\
	\frac{\partial^2 J(z_n)}{\partial\bm{\nu}_{1:n}^2}&=
	\frac{\partial^2 J^-(z_{n})}{\partial\bm{\nu}_{1:n}^2}+H(n)Q_y^{-1}H^\top(n).
\end{align*}
Let $\nu(n)=\hat{\nu}^-(n)$. The gradient then takes the form:
\begin{align}\label{eq1.31}
	\frac{\partial J(z_n)}{\partial\bm{\nu}_{1:n}}=\begin{bmatrix}
		0\\
		-H(n)^\top Q_y^{-1}(y(n)-\mathcal{N}_{y,s_n})
	\end{bmatrix}.
\end{align}
Similarly, the covariance matrix of the posterior estimation error can be expressed as the right-block of the inverse of the Hessian matrix, which reads as follows:
\begin{align}\label{eq1.32}
	P(n)=(P^-(n)+H(n)^\top Q_yH(n))^{-1}.
\end{align}
\sloppy Let us denote the Kalman gain as $\Gamma(n)=P^-(n)H^\top(n)[H(n)P^-(n)H^\top(n)+Q_y]^{-1}$.
Then, the covariance matrix is $P(n)$ equivalent to $(I-\Gamma(n)H(n))P^-(n)$ which yields \eqref{eq1.18}.

Therefore, the EKF process minimizes the one-step loss function \eqref{eq1.21} and \eqref{eq1.22}, and in turn minimizes the loss function \eqref{eq1.20} jointly. This concludes the proof.
\qedsymbol

\bibliographystyle{ieeetr}
\bibliography{references}

\begin{IEEEbiography}[{\includegraphics[width=0.95in,height=1.3in,clip]{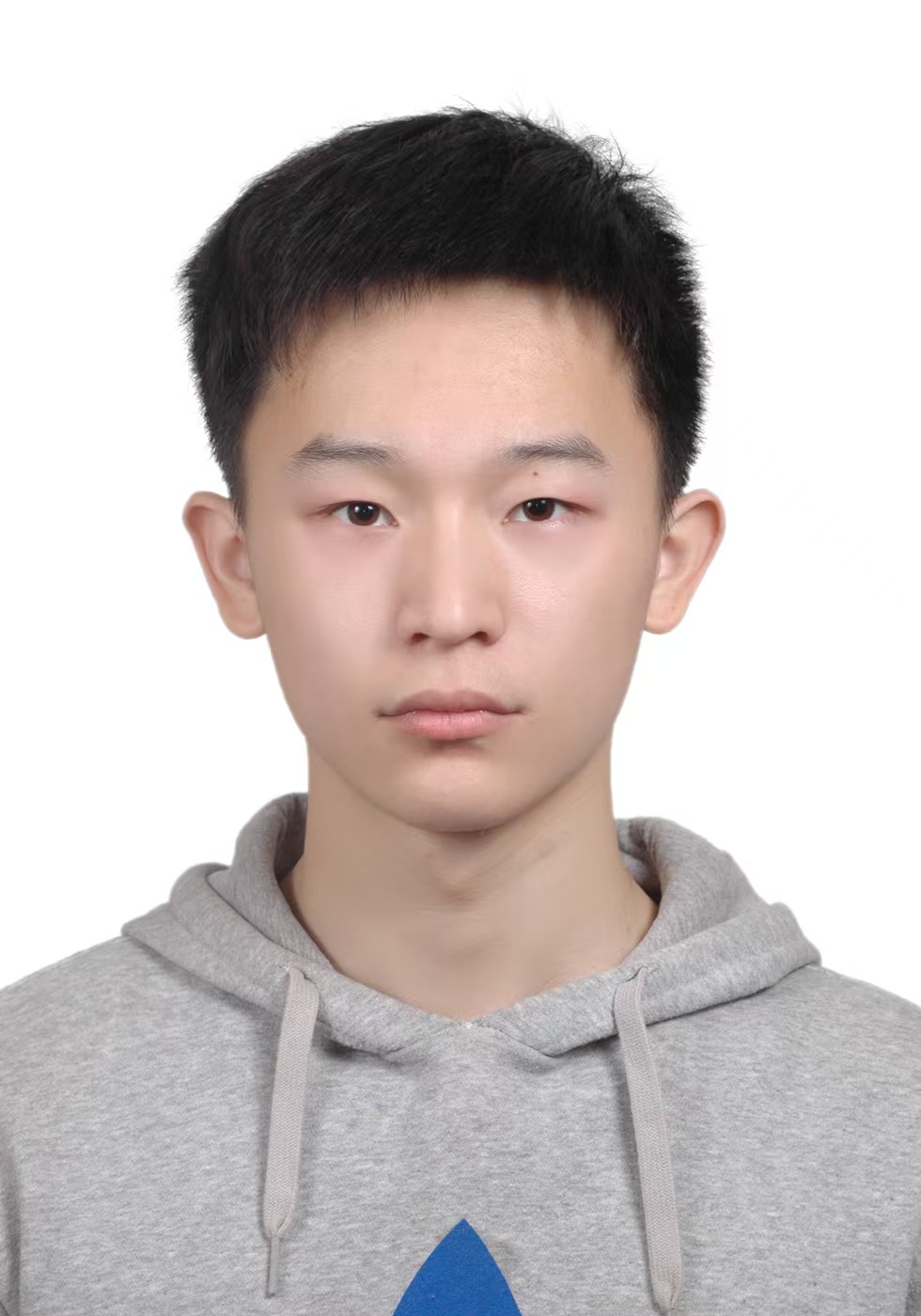}}]%
	{Yanxin Zhang} received his B.Sc. degree in the School of Mathematics and Statistics from the Huazhong University of Science and Technology, Wuhan, China, in 2019, and the M.Sc. degree in
	the School of Science from Jiangnan University, Wuxi, China, in 2023. He is currently pursuing the Ph.D. degree with the
	School of Automation, Beijing Institute of Technology,
	Beijing, China. His current research interests include
	system identification and parameter estimation of
	switching systems.
\end{IEEEbiography}
\begin{IEEEbiography}[{\includegraphics[width=0.95in,height=1.2in,clip]{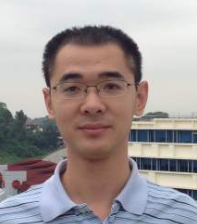}}]
{Chengpu Yu} received the B.E. and M.E. degrees in electrical engineering from the University of Electronic Science and Technology
of China, Chengdu, China, in 2006 and 2009,
respectively, and the Ph.D. degree in electrical
engineering from Nanyang Technological University, Singapore, in 2014.
He was with the Internet of Things Lab at
Nanyang Technological University as a Research Associate, from 2013 to 2014, and with
Delft Center for Systems and Control as a
PostDoc, from 2014 to 2017. Since 2018, he has been with the Beijing
Institute of Technology, Beijing, China, as a Full Professor. His research
interests include system identification, distributed optimization, and
optical imaging.
\end{IEEEbiography}
\begin{IEEEbiography}[{\includegraphics[width=0.95in,height=1.2in,clip]{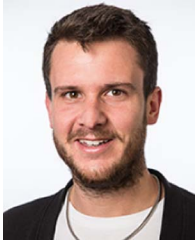}}]
{Filippo Fabiani} received the B.Sc. degree in
bio-engineering, the M.Sc. degree in automatic
control engineering, and the Ph.D. degree in
automatic control from the University of Pisa,
Pisa, Italy, in 2012, 2015, and 2019, respectively. He is currently an Assistant Professor with
the IMT School for Advanced Studies Lucca,
Lucca, Italy. In 2018–2019, he was Postdoctoral
Research Fellow with the Delft Center for Systems and Control, TU Delft, Delft, The Netherlands, while in 2019–2022, he was a Postdoctoral Research Assistant with the Control Group, Department of Engineering Science, University of Oxford, Oxford, U.K. His research interests include game theory, optimization and control of complex uncertain
systems, with applications in power networks and automated driving.
\end{IEEEbiography}

\end{document}